\documentclass[english]{article}
\usepackage{custom_tex}
\usepackage{algorithm}
\usepackage{xcolor} 
\usepackage[noend]{algpseudocode}

\newcommand{\ce}{\mathcal{E}}
\newcommand{\eps}{\epsilon}     

\newcommand{\upper}{\mathcal{U}}  

\renewcommand{\top}{\mathsf{T}}
\newcommand{\esssup}{\mathrm{ess\,sup}}

\newcommand{\tod}{\Rightarrow}       
\newcommand{\esd}{\mathrm{ESD}}   

\newcommand{\cplx}{\mathbb{C}}
\newcommand{\im}{\mathrm{Im}}      

\newcommand{\mrd}{\,\mathrm{d}}   

\newcommand{\perm}{\mathrm{perm}} 


\title{Deterministic parallel analysis: An improved method for selecting the number of factors and principal components}  

\date{\today}
\author{Edgar Dobriban\footnote{Department of Statistics, The Wharton School, University of Pennsylvania. E-mail: \texttt{dobriban@wharton.upenn.edu}.} \, and Art B. Owen\footnote{Department of Statistics, Stanford University. 
E-mail: \texttt{owen@stanford.edu}.}}


\begin{document}
\maketitle
\begin{abstract}
Factor analysis and principal component analysis (PCA) are used in many application areas.
The first step, choosing the number of components, remains a serious challenge. Our work proposes improved methods for this important problem. 
One of the most popular state-of-the-art methods is \emph{parallel analysis} (PA), which compares the observed factor strengths to simulated ones under
a noise-only model.
This paper proposes improvements to PA. We first derandomize it, proposing \emph{Deterministic Parallel Analysis} (DPA), which is faster and more reproducible than PA.
Both PA and DPA are prone to a \emph{shadowing} phenomenon in
which a strong factor makes it hard to
detect smaller but more interesting factors.
We propose \emph{deflation} to counter shadowing.
We also propose to raise the decision threshold to improve estimation accuracy. We prove several consistency results for our methods, and test them in simulations. 
We also illustrate our methods on data from the Human Genome Diversity Project (HGDP), where they significantly improve the accuracy. 
\end{abstract}

\section{Introduction}

Factor analysis is widely used in many application areas \cite[see e.g.,][]{malinowski2002factor, bai2008large, brown2014confirmatory}.
Given data describing $p$ measurements made on $n$ objects or $n$ subjects, factor analysis summarizes it
in terms of some number $k\ll\min(n,p)$ of latent factors,
plus noise. Finding and interpreting those factors may be very useful.
However, it is hard to choose the number $k$ of factors to 
include in the model.

There have been many proposed methods to choose the number $k$ of factors.
Classical methods include using an a priori fixed singular value threshold
\citep{kaiser1960application} and the scree test, a subjective search for a gap
separating small from large singular values \citep{cattell1966scree}.
The recent surge in data sets with $p$ comparable to $n$ or even larger
than $n$ has been met with new methods 
in econometrics \citep{bai2012statistical,ahn2013eigenvalue} 
and random matrix theory \citep{nadakuditi2008sample}
among many others.  In this paper we focus exclusively on the
method of parallel analysis (PA), first introduced by \cite{horn:1965}
and variations of it.

PA, especially the version by \cite{buja:eyob:1992},
 is one of the most popular ways to pick $k$ in factor analysis.
It is used, for example as a step in bioinformatics 
papers such  as \cite{leek2008general}.
It seems that the popularity of PA is because users find it `works well in practice'.
That property is hard to quantify on data with an unknown ground truth. 
We believe PA `works well' because it selects factors above
the \emph{largest noise singular value} of the data, as shown by \cite{dobriban2017factor}.
Our goal in this paper is to provide an improved method that estimates
this threshold faster and more reproducibly than PA does.

We will describe and survey parallel analysis in detail below.
Here we explain how it is based on sequential testing, assuming some
familiarity with factor analysis.
If there were no factors underlying the data, then the $p$ variables
would be independent.
We can then compare the size of the apparent
largest factor in the data matrix (the largest singular value) 
to what we would see in data with
$n$ IID observations, each of which consists of $p$ independent but
not identically distributed variables. 
We can simulate such data sets many times, either by generating new random variables or by permuting each column of the original data independently. If the first singular value
is above some quantile of the simulated first singular values, then
we conclude that $k\ge1$.
If we decide that $k\ge1$, then we check whether the second factor (second singular value)
has significantly large size compared to simulated second factors, and so on.
If the $k+1$'st factor is not large compared to simulated ones, then we stop with $k$ factors.

Our first improvement on PA is to derandomize it. 
For large $n$ and $p$, the size of the largest singular value
can be predicted very well using tools from 
random matrix theory.
The prediction depends very little on the distribution of the
data values, a fact called \emph{universality}.  For example, random matrix theory developed
for Gaussian data \citep{johnstone2001distribution}
has been seen to work very well on single
nucleotide polymorphism (SNP) data that take values
in $\{0,1,2\}$ \citep{patterson2006population}.
The recently developed Spectrode tool \citep{dobriban2015efficient}
computes the \emph{canonical distribution of noise singular values}---the so-called Silverstein-Marchenko-Pastur distribution---without any simulations. We propose to use the upper edge of that
distribution as a factor selection threshold.

We call the resulting method DPA for deterministic PA.
A deterministic threshold has two immediate advantages over PA.
It is \emph{reproducible}, requiring no randomization. It is also \emph{much faster} because
we do not need to repeat the factor analysis on multiple randomly 
generated data sets.  

DPA also \emph{frees the user from choosing an ad hoc
significance level} for singular values. 
If there are no factors in the data, then PA will mistakenly
choose $k\ge1$ with a probability corresponding to the quantile
used in simulation, because the top sample singular value has
the same distribution as simulated ones.  As a result, we need a huge number of simulations to get sufficiently extreme quantiles that we rarely make mistakes. By contrast, DPA
can be adjusted so that the probability of false positives vanishes.
For instance the user can insist that a potential factor
must be $1.05$ times larger than what we would see in uncorrelated noise.

\cite{zhou2017eigenvalue} recently developed a related method to derandomize PA, also by employing \cite{dobriban2015efficient}'s Spectrode algorithm to approximate the noise level. However, there are several crucial differences between our approaches: First, we use the variances of the columns as the population spectral distribution to be passed to Spectrode, while \cite{zhou2017eigenvalue} fit a five-point mixture to the empirical eigenvalue spectrum, and use the resulting upper edge. Second, we provide theoretical consistency results, while \cite{zhou2017eigenvalue} evaluate the methods empirically on some genetic data. 

In addition to randomness,  PA has a problem with \emph{shadowing}.  
If the first factor is very large then the simulations behind PA
lump that factor into the noise, raising the threshold for the next factors. Then PA can miss some of the next factors.
This was seen empirically by
\cite{owen:wang:2016} and explained theoretically by \cite{dobriban2017factor} who coined the term `shadowing'.
It is fairly common for real data sets to contain one `giant factor' explaining
a large proportion of the total variance.  In finance, many stock returns
move in parallel.  In educational testing, the `overall ability' factor 
of students is commonly large.  
Such giant factors are usually well known a priori
and can shadow smaller factors that might 
otherwise have yielded interesting discoveries.

We counter shadowing by \emph{deflation}. If the first factor is large enough
to keep, then we estimate its contribution and subtract it from the data.
We then look for additional factors within the residual.
The resulting deflated DPA (DDPA) is less affected by shadowing.

Deflating the $k$'th factor counters shadowing by lowering the threshold for the $k+1$'st. The result can be a cascade where threshold reductions by DDPA
trigger increased $k$ and vice versa.  
Recent work has shown that the threshold for detecting a factor is lower
than the threshold for getting beneficial estimates of the corresponding
eigenvectors \citep{perr:2009,gavish2014optimal}.
Because deflation involves subtracting an outer product
of weighted eigenvectors, we consider raising the deflation
threshold to only keep factors that improve estimation accuracy.
We call this method DDPA+.

\cite{vitale2017selecting} proposed a method for selecting the number of components using permutations as in PA, including deflation, using a standardized eigenvalue test statistic, and adjusting the permuted residual by a certain projection method. Their first two steps are essentially deflated (but not derandomized) PA. Similarly to  DDPA, this suffers from over-selection, and their remaining two steps deal with that problem. Their approach differs from ours in several important aspects, including that our final method is deterministic, and that we prove rigorous theoretical results supporting it.

An outline of this paper is as follows.
Section~\ref{sec:background} gives some context
on factor analysis, random matrices, and parallel analysis.
Section~\ref{sec:derandomize} presents our deterministic PA  (DPA).
We show that, similarly to PA, DPA selects \emph{perceptible} factors and omits \emph{imperceptible} factors. 
Section~\ref{sec:deflation} presents our deflated version of DPA, called DDPA
that counters the shadowing phenomenon by subtracting
estimated factors. We raise the threshold in DDPA to improve estimation accuracy. We call this final version DDPA+.
Section~\ref{sec:numerical} presents numerical simulations
where we know the ground truth. In our simulations, 
DPA is more stable than PA and much faster too. DDPA is better able to counter shadowing than
DPA but the number of factors it chooses has more variability than
with DPA.  Raising the decision threshold in DDPA+ appears to reduce its variance.
The results are reproducible with software available from \url{github.com/dobriban/DPA}. 
In Section~\ref{sec:hgdp}
we compare the various methods on the Human Genome Diversity Project (HGDP) dataset \citep{li2008worldwide}.  
We believe that PA, DPA, and DDPA choose too many
factors in that data. The choice for DDPA+ seems to be more accurate but conservative.
Section~\ref{sec:getthresh} gives a faster way to
compute the DPA threshold we need by avoiding the full Spectrode computation
as well as a way to compute the DDPA+ threshold.
We close with a discussion of future work in Section \ref{sec:disc}. 
The proofs are presented in Section \ref{sec:supp}. 

\section{Background}\label{sec:background}

In this section we introduce our notation and quote some results from random matrix theory.  
We also describe goals of factor analysis and survey some of the literature on parallel analysis. Space does not allow a full account of factor analysis.

\subsection{Notation}

The data are the $n\times p$ matrix $X$ and contain $p$ features of $n$ samples.  The rows of $X$ are IID $p$-dimensional observations $x_i$, for $i=1,\dots,n$. We will work with the standard factor model, where observations have the form $x_i=\Lambda\eta_i+\ep_i$. Here $\Lambda$ is a non-random ${p\times r}$ matrix of factor loadings, $\eta_i$ is the $r$-dimensional vector of factor scores of the $i$-th observation, and $\ep_i$ the $p$-dimensional noise vector of the $i$-th observation.

In matrix form, this can also be written as 
\begin{align}\label{eq:factormodel}
X =\eta\Lambda^\top +Z\Phi^{1/2}.
\end{align}
Here, $\eta$ is an $n \times r$  random matrix containing the factor values, so
$\eta_i$ is the $i$'th row of $\eta$. Also  $\ce=Z\Phi^{1/2}$ is the $n \times p$ matrix containing the noise, so  $\ep_i$ is the $i$'th row of $\ce$. 

Thus, the noise is modeled as $\ep_i = \Phi^{1/2} z_i$, where $z_i$ is the $i$'th row of $Z$. 
The $n\times p$ matrix $Z$ contains independent random variables with mean $0$
and variance $1$, and $\Phi$ is a $p\times p$ diagonal matrix
of idiosyncratic variances $\Phi_j$, $j=1,\dots,p$.
The covariance matrix of $x_i$
is $\Sigma = \Lambda\Psi\Lambda^\top+\Phi$
where $\Psi\in\real^{r\times r}$ is the covariance of $\eta_i$.
We define the scaled factor loading matrix as
$\Lambda\Psi^{1/2}\in\real^{p\times r}$. Note that we could reparametrize the problem by this  matrix. The factor model has known identifiability problems, see for instance \cite{bai2012statistical} 
for a set of possible resolutions. However, in our case, we only want to estimate the number of large factors, which is asymptotically identifiable. 

The model~\eqref{eq:factormodel} contains $r$ factors, and we generally do not know $r$. We use $k$ factors and $k$ is not necessarily the same as $r$.   


We will need several matrix and vector norms.
For a vector $v\in\real^m$ we use the usual $\ell_1$, $\ell_2$, and $\ell_\infty$
norms $\Vert v\Vert_1$, $\Vert v\Vert_2$ and $\Vert v\Vert_\infty$.
An unsubscripted vector norm $\Vert v\Vert$ denotes $\Vert v\Vert_2$.
For a matrix $A\in\real^{n\times r}$ we use the induced norms 
$$\Vert A\Vert_p=\sup_{v\ne0}\Vert Av\Vert_p/\Vert v\Vert_p.$$
The spectral---or operator---norm is  $\Vert A\Vert_2$,
while $\Vert A\Vert_\infty =\max_{1\le i\le n}\sum_{j=1}^r|X_{ij}|$
and $\Vert A\Vert_1 =\max_{1\le j\le r}\sum_{i=1}^n|X_{ij}|$.
The Frobenius norm is given by $\Vert A\Vert_F^2=\sum_i\sum_jA_{ij}^2$.

We consider $\eta\Lambda^\top$ to be the signal matrix and
$\ce = Z\Phi^{1/2}$ to be the noise.
It is convenient to normalize the data,  via
\begin{align}\label{eq:normalized}
n^{-1/2}X=S+N\quad\text{with}\quad S=n^{-1/2}\eta\Lambda^\top
\quad\text{and}\quad N=n^{-1/2}\ce.
\end{align}

The aspect ratio of $X$ is $\gamma_p = p/n$.
We consider limits with $p\to\infty$ and 
$\gamma_p\to\gamma\in(0,\infty)$. In our models, under some precise assumptions stated later, $\Vert N\Vert_2\to b>0$ almost surely, for some $b\in(0,\infty)$.
We call $b$ the \emph{size of the noise}.

For a positive semi-definite matrix $\Sigma\in\real^{p\times p}$, with eigenvalues
$\lambda_1\ge\lambda_2\ge\lambda_p\ge0$, its empirical spectral distribution is 
$$
\esd(\Sigma) = \frac1p\sum_{j=1}^p\delta_{\lambda_j}
$$
where $\delta_\lambda$ denotes a point mass distribution at $\lambda$.
In our asymptotic setting, $\esd(\Sigma)$ converges weakly to
a distribution $H$, denoted $\esd(\Sigma)\tod H$.
For a general bounded probability distribution $H$, its 
\emph{upper edge} is the essential supremum 
$$\upper(H)\equiv\esssup(H) =\inf\bigl\{ M\in\real\mid H((-\infty,M])=1\bigr\} .$$ 

The sample covariance matrix is $\hat\Sigma =(1/n)X^\top X$, and
we let $\hat D = \diag (\hat\Sigma)$.
If there are no factors, then $\hat D$ is a much better estimate of $\Sigma$
than $\hat\Sigma$ is.
We call $\esd(\hat D)$ the \emph{variance distribution}. 
The singular values of a matrix $X\in\real^{n\times p}$ are denoted by
$\sigma_1(X)\ge\sigma_2(X)\ge\cdots\ge\sigma_{\min(n,p)}(X)\ge0$.
The eigenvalues of $\hat\Sigma$ are $\lambda_j = \sigma^2_j(n^{-1/2}X)$.

\subsection{Random matrix theory}

Here we outline some basic results needed  from random matrix theory, see \citep{bai2009spectral, paul2014random,yao2015large} for reference. We focus on the Marchenko-Pastur distribution and the Tracy-Widom distribution.

For the spherically symmetric case
$\cov(x_i)=I_p$ we have $\esd(I_p)=\delta_1$,
but the eigenvalues of the sample covariance matrix $\hat\Sigma=(1/n)X^\top X$
do not all converge to unity in our limit when the sample size is only a constant times 
larger than the dimension.
Instead the empirical spectral distrbution of $\hSigma$ converges to the \emph{Marchenko-Pastur (MP) distribution}, $\esd(\hat\Sigma)\tod F_\gamma$ \citep{marchenko1967distribution}.  
If $n$ and $p$ tend to infinity with $p/n\to\gamma\in(0,1]$
then $F_\gamma$ has probability density function
\begin{align}\label{eq:mpdist}
f_\gamma(\lambda) &=
\begin{cases}
\dfrac{\sqrt{(\lambda-a_\gamma) (b_\gamma-\lambda)}}
{2\pi\lambda\gamma}, & a_\gamma\le \lambda\le b_\gamma\\[2ex]
0, &\text{else,}
\end{cases}
\end{align}
where
$$a_\gamma = (1-\sqrt{\gamma})^2\quad\text{and}\quad
b_\gamma = (1+\sqrt{\gamma})^2.
$$
If $\gamma>1$, then $F_\gamma$ is a mixture placing
probability $1/\gamma$ on the above density
and probability $1-1/\gamma$ on a point mass at $0$.
In either case, it is also known that the largest eigenvalue of the sample covariance matrix converges to the upper edge of the MP distribution
$\upper(\esd(\hat\Sigma))\to (1+p/n)^2>\upper(\esd(\Sigma))=1$.
If the components of $\ep_i$ are IID with kurtosis $\kappa\in(-2,\infty)$, then $\upper(\esd(\hat D))$
is the largest of $p$ IID random variables with mean $1$ and variance $(\kappa+2)/n$.
Then $\upper(\esd(\hat D))$ will be much closer to $1$ than $\upper(\esd(\hat\Sigma))$ is.

For a general distribution $H$ of bounded nonnegative eigenvalues of $\Sigma$,
there is a generalized Marchenko-Pastur distribution
$F_{\gamma,H}$
\citep{bai2009spectral, yao2015large}. 
If $\gamma_p\to\gamma$ and $\esd(\Sigma)\to H$,
then under some moment conditions $ \smash{\esd(\hat\Sigma)\to F_{\gamma,H}}$.

We can use Spectrode \citep{dobriban2015efficient} to compute $F_{\gamma,\hat H}$
and obtain $\upper( F_{\gamma,\hat H})$
for any $ \smash{\hat H}$.  Typically we will use $\smash{\hat H=\esd(\hat\Sigma)}$ or $ \smash{\esd(\hat D)}$.
It is convenient to use $ \smash{F_{\gamma,\hat\Sigma}}$ as a shorthand for
$ \smash{F_{\gamma,\esd(\hat\Sigma)}}$.
If the data are noise without any
factors, we can estimate $\Sigma$ by
the diagonal matrix of variances
$\hat D = \diag(\hat\Sigma)$ and expect
that the largest eigenvalue of the empirical noise covariance matrix $\hSigma$ should
be close to $\upper(F_{\gamma,\hat D})$.

The largest eigenvalue of the sample covariance matrix
is distributed around the upper edge of $F_{\gamma,H}$.
The difference, appropriately scaled,
has an asymptotic Tracy-Widom distribution.
See \cite{hachem2016large} and \cite{lee2016tracy} for recent results. 
In Gaussian white noise, \cite{johnstone2001distribution}
shows that the scale parameter is
\begin{align}\label{eq:imjtw}
\tau_p =n^{-1/2}\cdot (1+(p/n)^{1/2})\cdot(n^{-1/2}+p^{-1/2})^{1/3},
\end{align}
so that fluctuations in the largest eigenvalue quickly become negligible.

\subsection{Factor analysis}\label{sec:factor}

A common approach to factor analysis begins with a principal component analysis (PCA) of the data, and selects factors based on the size of the singular values of $n^{-1/2}X$. 
Let $n^{-1/2}X$ have singular values $\sigma_j$ for $j=1,\dots,\min(n,p)$.
Now choosing $k$ is equivalent to choosing a threshold $\sigma_*$ and retaining
all factors with $\sigma_\ell(n^{-1/2}X)>\sigma_*$.
There are several different goals for factor analysis and different numbers $k$ may be best for each of them. Here we list some goals.

In some cases the goal is to select \emph{interpretable} factors. In many applications, there are known factors that we expect to see in the data. These are the giant factors mentioned above, such as in finance, where many stock returns move in parallel. In such cases, interest centers on new interpretable factors beyond the already known ones. For instance, we may look for previously unknown segments of the market. The value of the threshold is what matters here, however, users will typically also inspect and interpret the factor loadings.

Another goal is \emph{estimation} of the signal matrix $\eta\Lambda^\top$, without necessarily
interpreting all of the factors, as a way of denoising $X$.
Then one can choose a threshold by calculating the effect of retained factors on the mean squared error. Methods for estimating the unobserved signal have been developed under various assumptions by \cite{perr:2009}, \cite{gavish2014optimal}, \cite{nadakuditi2014optshrink} and \cite{owen:wang:2016}.
The optimal $k$ for estimation can be much smaller than the true value $r$.

In bioinformatics, factor models are useful for \emph{decorrelating} genomic data so
that hypothesis test statistics are more nearly independent.
See for example, \cite{leek2007capturing}, \cite{sun2012}, \cite{gagnon2013removing},
and~\cite{gerard2017unifying}.
Test statistics that are more nearly independent are better suited to multiplicity
adjustments such as the Benjamini-Hochberg procedure. 


\subsection{Parallel analysis}

Parallel analysis was introduced by \cite{horn:1965}. The idea is to simulate
independent  data with the same variances as the original data but
without any of the correlations that factors would induce.
If the observed top eigenvalue of the sample covariance is significantly
large, say above the simulated 95'th percentile, then we include that
factor/eigenvector in the model. If the first $k\ge1$ factors have been
accepted, then we accept the $k+1$'st one if it is above the 95'th percentile
of its simulated counterparts. Horn's original approach used the mean instead of the 95'th percentile.
Horn was aware of the bias in the largest eigenvalue of sample covariance matrices even before 
\cite{marchenko1967distribution} characterized it theoretically.

There is a lot of empirical evidence in favor of PA. For instance,
\cite{zwic:veli:1986} compared five methods in some small
simulated data sets.  Their criterion was choosing a correct $k$
not signal recovery. They concluded that  ``\emph{The PA method was consistently accurate.}'' The errors of PA were typically to overestimate
$k$ (about 65\% of the errors). In addition, \cite{glor:1995} wrote that ``\emph{Numerous studies have consistently shown that Horn's
parallel analysis is the most nearly accurate methodology for
determining the number of factors to retain in an exploratory
factor analysis.}'' He also reported that the errors are mostly of overestimation.



The version of PA 
by \cite{buja:eyob:1992} replaces Gaussian simulations by independent 
random permutations within the columns of the data matrix (See Algorithm \ref{pa}).
Permutations have the advantage of removing the parametric assumption.
They are widely used for decorrelation in genomics to identify latent effects such
as batch variation.

\begin{algorithm}
\caption{\textsc{PA}: Parallel Analysis}
\begin{algorithmic}[1]
\State \textbf{input}: Data $X\in\real^{n\times p}$, centered, containing $p$ features of $n$ samples. Number of permutations $n_p$ (default = 19). Percentile $\alpha$ (default = 100). 
\State Generate $n_p$ independent random matrices $X_\pi^i$, $i=1,\ldots,n_p$ in the following way. Permute the entries in each column of $X$ (each feature) separately.
\State Initialize:  $k\gets0$. 
\State $H_k \gets$ distribution of the $k$-th largest singular values of $X_\pi^i$, $i=1,\ldots,n_p$. 
\If{$\sigma_k(X)$ is greater than the $\alpha$-th percentile of $H_k$}
\State $k\gets k+1$
\State  Return to step 4. 
\EndIf
\State \textbf{return}: Selected number of factors $k$. 
\end{algorithmic}
\label{pa}
\end{algorithm}


PA was developed as a purely empirical method, without theoretical justification. 
Recently, \cite{dobriban2017factor} developed a theoretical understanding: 
PA selects certain perceptible factors in high-dimensional factor models, as described below. \cite{dobriban2017factor} clarified the limitations of PA, including the shadowing phenomenon mentioned above. 

\section{Derandomization}\label{sec:derandomize}

If there are no factors, so $r=0$ in~\eqref{eq:factormodel}, then
$X=Z\Phi^{1/2}$ is a matrix of uncorrelated noise.
We can estimate $\Phi$ by $\hat D = \diag(X^\top X/n)$
and compute the upper edge of the MP distribution induced by $\hat D$ with aspect ratio $p/n$, i.e., $\upper( F_{p/n,\hat D})$. This is a deterministic approximation of the location of the largest noise eigenvalue of $n^{-1} X^\top X$.

Our new method of \emph{Derandomized Parallel Analysis} (DPA) chooses $k$ factors where $k$ satisfies
$$\sigma^2_k(n^{-1/2}X) > \upper(F_{p/n,\hat D}) \ge\sigma^2_{k+1}(n^{-1/2}X).$$
The Spectrode method \citep{dobriban2015efficient}
computes $F_{p/n,\hat D}$ from which
we can obtain this upper edge. 
We give an even faster algorithm in Section \ref{sec:fast_edge}
that computes $\upper(F_{p/n,\hat D})$ directly.

The threshold that DPA estimates is a little different from the PA one.
For the $k$'th singular value, PA is estimating something like
the $1-k/p$ quantile of the generalized Marchenko-Pastur distribution
while DPA uses the upper edge.   This difference is minor for small $k$.

We say that factor $k\in\{1,\dots,r\}$
is \emph{perceptible}
if $\sigma_k(n^{-1/2}X)>b+\eps$ a.s.\ for some $\eps>0$,
where $b$ is the almost sure limit of $n^{-1/2}\Vert \ce\Vert_2$
defined in Section~\ref{sec:background}.
It is \emph{imperceptible}
if $\sigma_k(n^{-1/2}X)<b-\eps$ a.s.\ holds.  

Perceptible factors are defined in terms of $X$ and $N$, and not in terms of underlying factor
signal size. This may seem a bit circular, or even tautological. However, there are two reasons for adopting this definition. First, as \cite{dobriban2017factor} has argued, in certain spiked covariance models one can show that a large underlying signal size leads to separated factors. Second, this is the ``right'' definition mathematically, leading to elegant proofs. Moreover, the definition is related to the BBP phase transition \citep{baik2005phase}; and it reduces to the BBP phase transition for the usual spiked models. However, our definition also makes sense in other models, where $p/n\to\infty$ and the number of spikes diverges to infinity. These are not included in usual spiked models. See \cite{dobriban2017factor} Section 5, for a detailed explanation. 


Theorem~\ref{thm:fa_cons} below
shows that DPA selects all perceptible factors
and no imperceptible factors, just like PA does. While the result is similar to those on PA from \cite{dobriban2017factor}, the proof is different. 
Let $\Psi^{1/2}$ be the symmetric square root of $\Psi=\cov(\eta_i)$, and recall that we defined the scaled factor loading matrix $\Lambda\Psi^{1/2}$. 

\begin{theorem}[DPA selects perceptible factors]
\label{thm:fa_cons}
Let the centered data matrix $X$ follow the factor model~\eqref{eq:factormodel} with 
IID $x_i = \Lambda \eta_i +\ep_i$, $i=1,\dots,n$. Assume
the following conditions (for some positive constants $\eps$ and $\delta$):
\begin{compactenum}[\quad 1)]
\item {\bf Factors}: The factors $\eta_i$ have the form $\eta_i  = \Psi^{1/2} U_i$, where $U_i$ has $r$ independent  entries with mean zero, variance one,  and bounded moments of order $4+\delta$.
\item {\bf Idiosyncratic terms}:  The idiosyncratic terms are $\ep_i=\Phi^{1/2} Z_i$, where $\Phi^{1/2}$ is a diagonal matrix, and $Z_i$ have $p$ independent entries with mean zero, variance one, and bounded moments of order $8+\eps$. 
\item  {\bf Asymptotics}: $n,p \to \infty$, such that $p/n \to \gamma >0$, and
there is a bounded distribution $H$ with
$\esd(\Phi)\tod H$ and $\max_{1\le j\le p} \Phi_j \to \upper(H)$. 
\item {\bf  Number of factors}: 
The number $r$ of factors is fixed, or grows such that $r/n^{1+\delta/4}$ is summable. 
\item {\bf  Factor loadings}: The  scaled loadings are delocalized, in the sense that $\Vert\Lambda\Psi^{1/2}\Vert_\infty\to0$. 
\end{compactenum} 
Then with probability tending to one, DPA  \emph{selects all perceptible factors}, and \emph{no imperceptible factors}. 
\end{theorem}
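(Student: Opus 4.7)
The heart of the argument is to show that the DPA threshold $\upper(F_{p/n,\hat D})$ converges almost surely to $b^2$, where $b$ is the almost sure limit of $\Vert N\Vert_2$. Once this is in hand, the definitions of perceptibility and imperceptibility yield the conclusion almost immediately: for a perceptible factor $k$ we have $\sigma_k^2(n^{-1/2}X)>(b+\eps)^2$ eventually, which exceeds any sequence converging to $b^2$; for an imperceptible factor $k$ we have $\sigma_k^2(n^{-1/2}X)<(b-\eps)^2$ eventually, which lies below such a sequence. Since $\sigma_k(n^{-1/2}X)$ is nonincreasing in $k$, the factors above the threshold form an initial segment, which is exactly what DPA returns.

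The plan to pin down the threshold limit proceeds in three steps. First, I would analyze $\hat D_j = n^{-1}\sum_i X_{ij}^2$ under the factor model. Expanding $X_{ij}=(\Lambda\eta_i)_j+\ep_{ij}$ yields three pieces: $n^{-1}\sum_i \ep_{ij}^2\to\Phi_j$ by the law of large numbers; $n^{-1}\sum_i (\Lambda\eta_i)_j^2$ is bounded by $r\,\Vert\Lambda\Psi^{1/2}\Vert_\infty^2$, which tends to zero by the delocalization assumption (5) combined with the growth condition (4) on $r$; and the cross term is handled by Cauchy-Schwarz and the moment assumptions on $\eta_i$ and $Z_i$. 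A concentration argument (Bernstein or a union bound using the $(4+\delta)$th and $(8+\eps)$th moments) upgrades this to convergence uniform in $j$, so that $\max_j|\hat D_j-\Phi_j|\to 0$ almost surely.

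Second, combining the previous step with assumption (3) that $\esd(\Phi)\tod H$ and $\max_j\Phi_j\to\upper(H)$, I obtain both $\esd(\hat D)\tod H$ and $\max_j \hat D_j\to\upper(H)$. Third, I invoke the continuity of the upper edge of the generalized Marchenko-Pastur distribution: if a sequence of spectral distributions $\hat H$ converges weakly to $H$, has matching upper edges in the limit, and $\gamma_p\to\gamma$, then $\upper(F_{\gamma_p,\hat H})\to \upper(F_{\gamma,H})$. This is the content of the Spectrode analysis in \cite{dobriban2015efficient} and follows from the Silverstein equation together with a monotonicity argument. Finally, standard random matrix theory identifies the almost sure limit of $\Vert N\Vert_2^2$ with $\upper(F_{\gamma,H})$, so $\upper(F_{p/n,\hat D})\to b^2$ as desired.

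The main obstacle is the continuity of the upper edge map $(\gamma,H)\mapsto\upper(F_{\gamma,H})$ at the true population limit. Weak convergence alone is insufficient because the essential supremum is not a continuous functional of weak limits, which is why assumption (3) explicitly pins down the maximum of $\Phi$. I expect the proof to either cite the Spectrode-based continuity result or, if a self-contained argument is needed, to establish it via the defining fixed-point equation for the Stieltjes transform of $F_{\gamma,H}$, using monotonicity of the relevant function in its spectral parameter and the convergence of $\max_j\hat D_j$ to bound the support from above.
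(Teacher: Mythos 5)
Your overall route is the same as the paper's: reduce everything to showing $\upper(F_{p/n,\hat D})\to b^2$ a.s.\ (the paper isolates this as Lemma~\ref{lem:DPA_sig}), control $\max_j|\hat D_j-\Phi_j|$ by a union bound using the $8+\eps$ moments, invoke continuity of the upper edge of the generalized Marchenko--Pastur law under weak convergence plus convergence of the essential suprema (Theorem~\ref{thm:upper_edge_conv}, proved exactly as you anticipate, from the Silverstein equation and the monotonicity structure of its inverse map $G$), and identify $\Vert N\Vert_2^2\to\upper(F_{\gamma,H})$ by standard results. You also correctly identify the continuity of the upper-edge map as the main technical obstacle and why assumption~3) pins down $\max_j\Phi_j$.

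There is, however, one step that fails as written: your bound $n^{-1}\sum_i(\Lambda\eta_i)_j^2\le r\,\Vert\Lambda\Psi^{1/2}\Vert_\infty^2$ does not tend to zero under the stated hypotheses once $r$ grows. Condition~4) permits $r$ as large as roughly $n^{1-\eps}$ (when $\delta=4$), while condition~5) gives no rate for $\Vert\Lambda\Psi^{1/2}\Vert_\infty\to0$; for example $r=n^{1/2}$ with $\Vert\Lambda\Psi^{1/2}\Vert_\infty=n^{-1/8}$ makes your bound diverge. The extra factor of $r$ comes from applying Cauchy--Schwarz across the $r$ factors. The paper avoids it by bounding the signal column norm with the triangle inequality,
$\Vert s_j\Vert\le n^{-1/2}\sum_{\ell=1}^r\Vert u_\ell\Vert\,|d_\ell(j)|\le n^{-1/2}\max_\ell\Vert u_\ell\Vert\cdot\Vert\Lambda\Psi^{1/2}\Vert_\infty$,
exploiting that $\Vert\Lambda\Psi^{1/2}\Vert_\infty$ is precisely the maximum row $\ell_1$-norm $\max_j\sum_\ell|d_\ell(j)|$; no factor of $r$ appears. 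The price is that one must show $\max_{1\le\ell\le r}n^{-1/2}\Vert u_\ell\Vert$ stays bounded uniformly over a growing number of factors, and this is exactly where condition~4) and the $(4+\delta)$-th moments of $U_i$ enter, via a union bound giving $\Pr(M_u>2)\le Crn^{-(4+\delta)/2}$, which is summable by assumption. Your proposal does not locate the role of condition~4) here, and with your bound the growing-$r$ case of the theorem would not go through. For fixed $r$ your argument is fine.
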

\begin{proof}
See Section~\ref{fa_cons_pf}. 
\end{proof}

This shows that the behavior of DPA is similar to that of PA \citep{dobriban2017factor}. However, there are some important differences. In contrast to the result on PA, Theorem~\ref{thm:fa_cons}
allows both a \emph{growing number} of factors, as well as \emph{growing factor strength}. Specifically, the number of factors can grow, as long as $r/n^{1+\delta/4}$ is summable. For instance, if we assume that $\delta=4$, which translates into the condition that the 8-th moments of the entries of $U_i$ are uniformly bounded, then $r$ can grow as fast as $n^{1-\eps}$ for any $\eps>0$. 

The factor strengths can grow as follows.
Let the scaled factor loading matrix $\Lambda\Psi^{1/2}$  have columns $d_\ell$
for $\ell =1,\dots,r$.
These columns can have a diverging Euclidean norm $\Vert d_\ell\Vert_2\to\infty$
so long as the sum of the max-norms tends to zero, $\sum_{\ell=1}^r\Vert d_\ell\Vert_\infty \to 0$.
As a result, the Frobenius norm of the scaled loading matrix can
grow,  subject to $\Vert\Lambda\Psi^{1/2}\Vert_{F}=o([n/r]^{1/2})$. 
There is a tradeoff between the number of factors and their sizes. 
For instance, if $r \sim n^{1-\eps}$, then the scaled factors can grow
while subject to their Euclidean norm being  $\Vert d_\ell\Vert = o(n^{\eps})$. 

DPA may yield false positives and false negatives stemming from factors
that are not perceptible.  
In particular, there is a positive probability,
even asymptotically, for DPA to pick $k\ge1$ from data with no factors. 
This may be why \cite{glor:1995} found PA's errors were ones of overestimation. 

We can remove false positives asymptotically by only taking 
factors above some multiple $1+\eps_p$ of the estimated upper edge. 
That is, we select a factor if $\sigma_k>(1+\eps_p) \cdot \upper(F_{p/n,\hat D})^{1/2}$.
The choice of $\eps_p$ is problem dependent; we think $0.05$ is large enough to trim out noise in the bioinformatics problems we have in mind.  A factor that is only $1.05$ times as large as what we would expect in pure noise is not likely to be worth interpreting. 

%

For estimation of the number of factors,
we might want to take $\eps_p\to0$ in a principled way
using the Tracy-Widom distribution.
We could take $\eps_p =$ $\smash{c_{n,p}\tau_p^{1/2}}$, where $c_{n,p}$ is a deterministic sequence
and $\tau_p$ is given by~\eqref{eq:imjtw}. The constant $c_{n,p}$ can be chosen as some fixed percentile of the Tracy-Widom distribution. However, this leads to only a tiny increase in the threshold, so we decided it is not worth making the method more complicated with this additional parameter.  Therefore, in the following DDPA, DDPA+ algorithms and simulations, $\eps_p=0$.

For estimation of the signal matrix $\eta\Lambda^\top$, we will discuss a different choice in Section \ref{evec_estim}.



\section{Deflation}\label{sec:deflation}

To address the shadowing of mid-sized factors, we will deflate $X$.
Let the singular value decomposition (SVD) of $X \in\real^{n\times p}$
be $X = \sum_{i=1}^{\min(n,p)} \sigma_i u_i v_i^\top$, with non-increasing $\sigma_i$.
If we decide to keep the first factor, then we subtract $\sigma_1u_1v_1^\top$
from $X$, recompute the upper edge of the MP distribution and repeat the process.
This ensures that the very strong factors are removed, and thus they do not shadow mid-sized factors. 
We call this DDPA for deflated DPA.



\begin{algorithm}
\caption{\textsc{DDPA}: Deflated Deterministic Parallel Analysis}
\begin{algorithmic}[1]
\State \textbf{input}: Data $X\in\real^{n\times p}$, centered, containing $p$ features of $n$ samples 
\State Initialize:  $k\gets0$. 
\State Compute variance distribution: $\hat H_p \gets\esd(\diag(n^{-1}X^\top X))$. 
\If{$\sigma_1(n^{-1/2}X)>(1+\eps_p)\upper(F_{\gamma_p, \hat  H_p})^{1/2}$,   [by default $\eps_p=0$]} \label{item:ddpathresh} 
\State $k\gets k+1$
\State $X \gets X - \sigma_1 u_1 v_1^\top$ (from the SVD of $X$) 
\State  Return to step 3. 
\EndIf
\State \textbf{return}: Selected number of factors $k$. 
\end{algorithmic}
\label{ddpa}
\end{algorithm}

After $k$ deflations, the current residual matrix  is $X=\sum_{i=k+1}^{\min{n,p}}\sigma_i u_i v_i^\top$.
Thus, this algorithm requires only one SVD computation. 

Deflation is a \emph{myopic} algorithm.  At each step we inspect the strongest apparent factor.
If it is large compared to a null model, we include it. Otherwise we stop. There is no lookahead.
There is a potential downside to myopic selection.  If some large singular values are close to each
other, but well separated from the bulk, the algorithm might stop without taking any of them,
even though it could be better to take all of them.  We think that this sort of shadowing from below
is unlikely in practice.

Deflation must be analyzed carefully, as the errors in estimating singular values and vectors
could propagate through the algorithm.
To develop a theoretical understanding of deflation, our first step is to show that it works in a setting which parallels Theorem \ref{thm:fa_cons}. Recall that the scaled factor loading matrix is $\Lambda\Psi^{1/2} = [d_1,\ldots,d_r]$.  The statement of this theorem involves the Stieltjes transform $m_{\gamma,H} = \E_H 1/(X-z)$ which is defined in more detail in the supplement. 

\begin{theorem}[DDPA selects the perceptible factors]
\label{thm:defpa_cons}
Consider factor models under the conditions 1 through 3 of Theorem \ref{thm:fa_cons},
with modified conditions 4 and 5 as follows:
\begin{compactenum}[\quad 1)]
\setcounter{enumi}{3}
\item {\bf  Number of factors}: 
The number $r$ of factors is fixed. 
\item {\bf  Factor loadings}: The vectors of scaled loadings $d_\ell$ are delocalized in the sense that, $\Vert d_\ell\Vert_\infty \to 0$ for $\ell=1,\dots,r$. They are also delocalized with respect to $\Phi$, in the sense that 
\begin{align}\label{eq:alsodelocalized}
\frac{x^\top (\Phi -zI_p)^{-1} d_\ell - m_{\gamma,H}(z) \cdot x^\top d_\ell}{\Vert d_\ell\Vert} \to 0
\end{align}
uniformly for $\Vert x\Vert\le1$, $\ell=1,\dots,r$, and $z\in\cplx$ with $\im(z) >0$ fixed.
\end{compactenum} 
Then with probability tending to one, DDPA with $\ep_p=0$
\emph{selects all perceptible factors}, and \emph{no imperceptible factors}.
\end{theorem}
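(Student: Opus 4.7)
The plan is to proceed by induction on the index $k$ of the factor being considered by DDPA, using Theorem~\ref{thm:fa_cons} as the base step. The inductive hypothesis I would maintain is that after DDPA has correctly deflated the first $k$ perceptible factors, the residual matrix $X^{(k)}$ admits the representation
\begin{equation*}
X^{(k)} = \sum_{\ell=k+1}^{r} \sigma^{*}_\ell u^{*}_\ell (v^{*}_\ell)^\top + Z\Phi^{1/2} + E^{(k)},
\end{equation*}
where the starred quantities are the population singular triples of $n^{-1/2}\eta\Lambda^\top$ remaining after removing the first $k$, and $\|E^{(k)}\|_2 = o(1)$. Granted this, the $(k{+}1)$-th iteration of DDPA is asymptotically equivalent to running the \emph{un-deflated} DPA on a factor model with $r-k$ factors, and Theorem~\ref{thm:fa_cons} (or the ingredients in its proof) makes the correct decision: accept if $\sigma_{k+1}(n^{-1/2}X) > b + \eps$, reject if $\sigma_{k+1}(n^{-1/2}X) < b - \eps$.

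The key technical lemma needed to propagate the induction is that the deflation step replaces the signal rank-one component by an accurate estimate of it. Concretely, for a perceptible factor, the generalized BBP-type phase transition under the extra delocalization condition~\eqref{eq:alsodelocalized} implies that the sample singular triple $(\hat\sigma_1,\hat u_1,\hat v_1)$ of $X^{(k)}$ is asymptotically aligned with the population triple $(\sigma^{*}_{k+1},u^{*}_{k+1},v^{*}_{k+1})$, up to deterministic scalar inflation factors and errors that vanish in operator norm. Indeed, \eqref{eq:alsodelocalized} is exactly the condition that controls bilinear forms $x^\top (\Phi - zI)^{-1} d_\ell$ used in the subordination arguments for singular vectors in generalized spiked models. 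Subtracting $\hat\sigma_1\hat u_1\hat v_1^\top$ therefore removes the $(k{+}1)$-th population rank-one signal term plus an $o(1)$ operator-norm perturbation, giving the inductive form of $X^{(k+1)}$.

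Second, I would show that the threshold $\upper(F_{\gamma_p,\hat H_p^{(k)}})$ remains correct after deflation. The deflated diagonal $\hat D^{(k)} = \diag(n^{-1}(X^{(k)})^\top X^{(k)})$ differs from the noise diagonal $\diag(n^{-1} \Phi^{1/2}Z^\top Z\Phi^{1/2})$ by contributions involving entries of the retained factor loadings $d_\ell$; because $\|d_\ell\|_\infty\to 0$, each such contribution is $o(1)$ uniformly, hence $\esd(\hat D^{(k)})\tod H$. Continuity of the upper edge of the generalized Marchenko--Pastur map, established via Spectrode in \cite{dobriban2015efficient}, then yields $\upper(F_{\gamma_p,\hat H_p^{(k)}}) \to \upper(F_{\gamma,H}) = b^2$.

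The main obstacle is controlling the accumulation of the $o(1)$ errors $E^{(k)}$ across the $r$ deflation steps: each deflation introduces a small but random perturbation, and one needs uniform control so that the noise behavior of the residual remains intact for the next BBP-type step. Two facts make this tractable. First, $r$ is fixed, so the errors compound only finitely many times. Second, the combined delocalization hypotheses (both $\|d_\ell\|_\infty\to 0$ and the resolvent-type condition~\eqref{eq:alsodelocalized}) are precisely what is needed to bound these errors uniformly in operator norm. Once the inductive structure is established, correct termination at the first imperceptible factor follows because $\sigma_1(n^{-1/2}X^{(k)}) \to b - \eps' < b = \upper(F_{\gamma,H})^{1/2}$ when no perceptible factor remains, so the \textbf{if} condition in Step~\ref{item:ddpathresh} fails and the algorithm stops.
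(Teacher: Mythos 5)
There is a genuine gap in your key technical lemma. You claim that for a perceptible factor the sample singular triple is aligned with the population triple so that subtracting $\hat\sigma_1\hat u_1\hat v_1^\top$ ``removes the $(k{+}1)$-th population rank-one signal term plus an $o(1)$ operator-norm perturbation.'' This is false in the regime the theorem covers. For a spike of fixed strength $\theta$ above the detection threshold, the top empirical singular value converges to a value strictly larger than $\theta$, and the cosines between empirical and population singular vectors converge to constants strictly less than one; consequently $\Vert\hat\sigma_1\hat u_1\hat v_1^\top - \theta ab^\top\Vert_2$ converges to a strictly positive constant, not to zero. Your inductive representation $X^{(k)} = (\text{remaining population signal}) + Z\Phi^{1/2} + E^{(k)}$ with $\Vert E^{(k)}\Vert_2 = o(1)$ therefore fails already at $k=1$ for any perceptible factor that is not asymptotically infinitely strong, and the reduction of step $k+1$ to an un-deflated instance of Theorem~\ref{thm:fa_cons} does not go through. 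The same issue infects your treatment of the deflated diagonal: the contribution to $\hat D^{(k)}$ comes from the \emph{empirical} right singular vectors $v_\ell(X)$, not from the population loadings $d_\ell$, so $\Vert d_\ell\Vert_\infty\to 0$ alone does not control it.

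The paper's proof avoids alignment entirely. It exploits the exact identities $\sigma_1(X_k)=\sigma_{k+1}(X)$ and $v_1(X_k)=v_{k+1}(X)$, so the tested singular values never change; all that must be controlled is the threshold, i.e., that $\esd(\diag(n^{-1}X_k^\top X_k))\tod H$ with convergent upper edge. This reduces to showing that the deflated signal has vanishing column $\ell_2$ norms, $\Vert S_k\Vert_{\infty,2}\to 0$, which in turn reduces to the delocalization of the \emph{empirical} singular vectors, $\sigma_\ell(X)\Vert v_\ell(X)\Vert_\infty\to 0$. That is the actual role of condition~\eqref{eq:alsodelocalized}: it feeds a deterministic-equivalents (resolvent) argument in the style of Benaych-Georges--Nadakuditi to bound $\Vert v_\ell(X)\Vert_\infty$, not to establish that the empirical rank-one term estimates the population one. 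This $\ell_\infty$ control of empirical spike singular vectors (Proposition~\ref{prop:spike_norm} in the paper) is the key missing ingredient in your proposal, and without it, or without the incorrect $o(1)$-operator-norm claim, your induction cannot be propagated.
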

\begin{proof}
See the supplement. 
\end{proof}

The proof requires one new key step, controlling the $\ell_\infty$ norm of the empirical spike eigenvectors in certain spiked models.  The delocalization condition~\eqref{eq:alsodelocalized} is somewhat technical, and requires that $d_\ell$ are ``random-like'' with respect to $\Phi$. 
This condition holds if $d_\ell$ are random vectors with IID coordinates, albeit only in the special case where the entries of $\Phi$ have a vanishing variability, so that we are in a near-homoskedastic situation. For other $\Phi$ matrices, there can of course be other vectors $d_\ell$ that make this condition hold. 

In practice, the advantage of deflation is that it works much better than plain DPA in the ``explosive'' setting with strong factors. We see this clearly in the simulations of Section~\ref{sec:numerical}.
Our theoretical results for deflation are comparable to usual DPA, so this empirical advantage is not reflected in Theorem \ref{thm:defpa_cons}. Analyzing  the strong-factor model theoretically seems to be beyond what the current proof techniques can accomplish.
 
%
As with DPA, we might want to increase the threshold in order to trim factors that are
not perceptible, or too small to be interesting or useful.

\begin{proposition}[Slightly increasing the threshold retains the properties of DDPA]
Consider the DDPA algorithm above where 
in step~\ref{item:ddpathresh} of algorithm \ref{ddpa} we select if $\sigma_1(n^{-1/2}X)>(1+\eps_p)\upper(F_{\gamma_p, \hat  H_p})^{1/2}$, and $\eps_p \to 0$.
Then the resulting algorithm satisfies Theorem \ref{thm:defpa_cons}.
\end{proposition}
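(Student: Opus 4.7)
The plan is to reduce this proposition directly to Theorem \ref{thm:defpa_cons} by showing that the multiplicative inflation $(1+\eps_p)$ of the threshold is asymptotically negligible, while the perceptibility/imperceptibility gaps are of fixed positive size. Throughout, I will follow the inductive structure used to analyze DDPA step by step.

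First, I would recall the key quantitative content of Theorem \ref{thm:defpa_cons}: at each deflation step $k$, after subtracting the first $k$ estimated rank-one components, the algorithm computes an upper-edge estimate $\upper(F_{\gamma_p,\hat H_p^{(k)}})^{1/2}$ that converges almost surely to the noise edge $b$, and the current residual's top singular value converges either to the perceptible-factor limit (strictly above $b$ by some fixed $\eps>0$) or falls below $b-\eps$ for imperceptible factors. The proof of Theorem \ref{thm:defpa_cons} already establishes these convergences along the deflation sequence, so I would simply invoke them without re-deriving.

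Next, I would observe that if $\eps_p\to 0$ and $\upper(F_{\gamma_p,\hat H_p^{(k)}})^{1/2}\to b$, then $(1+\eps_p)\,\upper(F_{\gamma_p,\hat H_p^{(k)}})^{1/2}\to b$ as well, since $b<\infty$. Consequently, for every $\eps'\in(0,\eps)$ and all $p$ sufficiently large, the inflated threshold lies in $(b-\eps',b+\eps')$ with probability tending to one. Therefore: for a perceptible factor whose associated residual top singular value exceeds $b+\eps$ almost surely, the inequality
\[
\sigma_1(n^{-1/2}X^{(k)}) > (1+\eps_p)\,\upper(F_{\gamma_p,\hat H_p^{(k)}})^{1/2}
\]
holds with probability tending to one, so DDPA with inflated threshold accepts this factor. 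For an imperceptible factor, $\sigma_1(n^{-1/2}X^{(k)})<b-\eps$ almost surely, hence strictly below the inflated threshold for large $p$, so the algorithm stops and rejects it.

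Finally, I would package this as an induction on the deflation index: at each step where a perceptible factor is accepted, the sample singular vectors used in the rank-one subtraction behave as in the original proof of Theorem \ref{thm:defpa_cons} (since the acceptance decision is the same on the event of probability tending to one), and so the next step's threshold estimate and residual admit the same limiting analysis. The main (and really only) subtlety is keeping track that $\eps_p\to 0$ at a rate independent of the step index $k$; since $r$ is fixed under condition 4, only finitely many deflation steps occur, and a union bound over these $O(1)$ events preserves the ``probability tending to one'' conclusion. No additional random-matrix input beyond Theorem \ref{thm:defpa_cons} is needed, and I do not anticipate any genuine obstacle—this is essentially a continuity argument exploiting the strict positivity of the gap $\eps$ in the definitions of perceptible and imperceptible factors.
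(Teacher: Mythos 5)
Your argument is correct and is exactly the verification the paper intends: the paper's own proof simply states that the result is immediate by checking that the steps of the proof of Theorem \ref{thm:defpa_cons} go through, and your write-up supplies that check (the inflated threshold still converges to $b$ since $\eps_p\to 0$, the perceptibility gaps are of fixed size $\eps$, and only finitely many deflation steps occur since $r$ is fixed). No discrepancy with the paper's approach.
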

\begin{proof}
The proof is immediate, by checking that the steps of the proof of Theorem \ref{thm:defpa_cons} go through.
\end{proof}

\subsection{DDPA+: Increasing the threshold to account for eigenvector estimation}
\label{evec_estim}

Every time deflation removes a factor, the threshold for the next factor is decreased.
For some data, DDPA will remove many factors one by one.  It is worth considering
a criterion more stringent than perceptibility.

Deflation involves subtracting $\sigma_ku_kv_k^\top$ from $X$.  This $u_kv_k^\top$
matrix is a noisy estimate of the $k$'th singular space of the signal matrix $S$.
A higher standard than perceptibility is to require that $u_kv_k^\top$
be an accurate estimate of the corresponding quantity of the matrix $S$.  
For this purpose we use a higher threshold for $\sigma_k$
that was obtained by \cite{perr:2009} and also by \cite{gavish2014optimal}.
This PGD threshold was derived for a white noise model.
In Section~\ref{pg} we extend it to a more general factor analysis
setup using ideas from the OptShrink algorithm \citep{nadakuditi2014optshrink}. We call the resulting method DDPA+.
We note that the Tracy-Widom scale factor $\tau_p$ of~\eqref{eq:imjtw} is too small to use in DDPA+. We provide the detailed algorithm in Algorithm \ref{ddpap}.


\begin{algorithm}

\caption{\textsc{DDPA+}: Deflated Deterministic Parallel Analysis+}
\begin{algorithmic}[1]
\State \textbf{input}: Data $X\in\real^{n\times p}$, centered, containing $p$ features of $n$ samples 
\State Initialize:  $k\gets0$. 
\State Compute singular values $\sigma_i(X)$, and eigenvalues $\lambda_i = \sigma_i^2$ sorted in decreasing order. Let  $\lambda = \sigma_1^2$, $r = \min(n,p)$.
\State Compute spectral functionals:

Stieltjes transforms: $$m = (r-1)^{-1} \sum_{i=2}^r (\lambda_i - \lambda)^{-1}, v = \gamma m -(1-\gamma)/\lambda,$$ 

D-transform: $D = \lambda m v$, 

Population spike estimate: $\ell = 1/D$, 

Derivatives of Stieltjes transforms: $$m' = (r-1)^{-1} \sum_{i=2}^r  (\lambda_i - \lambda)^{-2}, \,\, v' = \gamma m' +(1-\gamma)/\lambda^2$$ 

and derivative of D-transform $D' = mv+\lambda(mv'+m'v)$. 

\State Compute estimators of the squared cosines: $c_r^2 = m/(D'\ell)$, $c_l^2 = v/(D'\ell)$. 
\If{$\sigma_1^2 <4\ell^2 \cdot c_r^2 \cdot c_l^2$,}
\State $k\gets k+1$
\State $X \gets X - \sigma_1 u_1 v_1^\top$ (from the SVD of $X$) 
\State  Return to step 3. 
\EndIf
\State \textbf{return}: Selected number of factors $k$. 
\end{algorithmic}
\label{ddpap}
\end{algorithm}


\section{Numerical experiments}
\label{sec:numerical}

We perform numerical simulations to understand and compare the behavior of our proposed methods.  
We follow the simulation setup of \cite{dobriban2017factor}. 

\begin{figure}

\begin{subfigure}{.5\textwidth}
  \centering
  \includegraphics[scale=0.43]{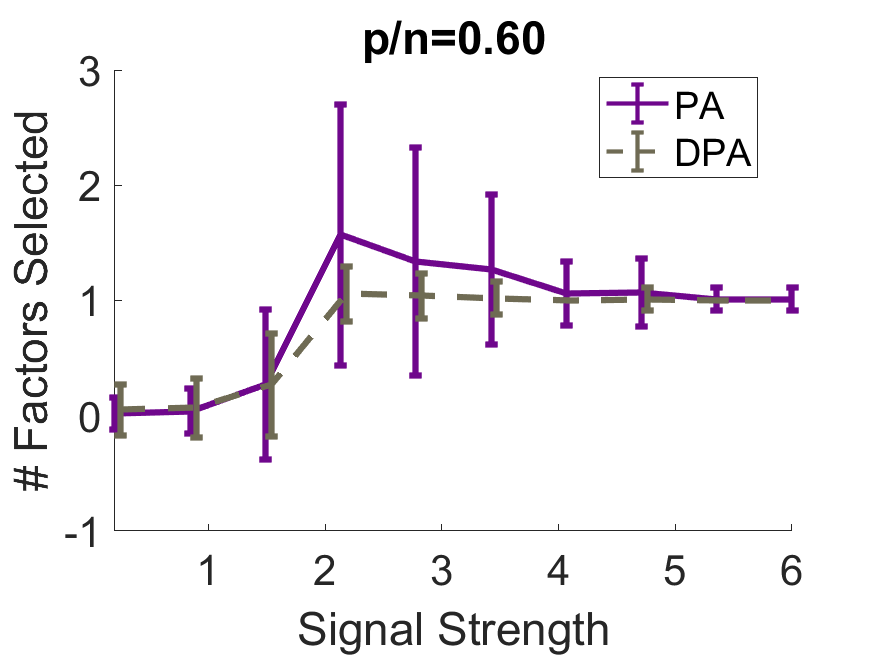}
  \caption{Effect of signal strength}
\end{subfigure}
\begin{subfigure}{.5\textwidth}
  \centering
  \includegraphics[scale=0.43]{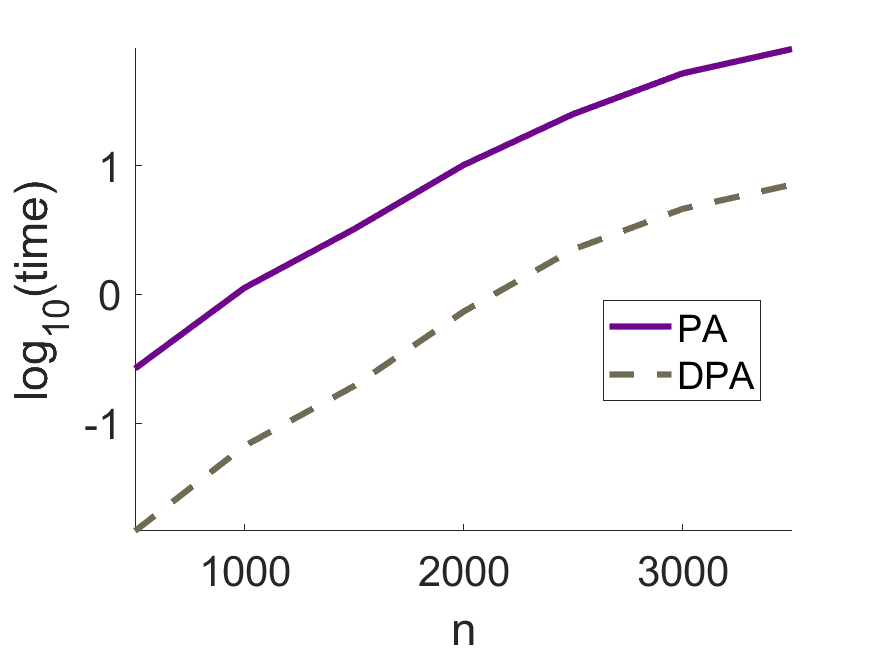}
  \caption{Running time}
\end{subfigure}
\caption{ (a)  Mean and $\pm1$SD of number of factors selected by PA and  DPA as a function of signal strength.
All SDs were zero. Small amount of jitter added for visual clarity.
 (b) Running time (in log-seconds) of  PA and  DPA.}
\label{fig:PA}
\end{figure}

\subsection{DPA versus PA}

First we compare DPA to PA.
For PA, we use the most classical version, generating 19 permutations, and selecting the $k$-th factor if $\sigma_k(X)$ is larger than the all of the permuted singular values. 

We simulate from the factor model $x_i = \Lambda \eta_i +\ep_i$. We generate
the noise $\ep_i \sim \N(0,T_p)$, where $T_p$ is a diagonal matrix of noise variances uniformly distributed on a grid between one and two.  The factor loadings are generated as $\Lambda = \theta \tilde Z$, 
where $\theta>0$ is a scalar corresponding to factor strength, and $\tilde Z$ is generated by normalizing
the columns of a random matrix
$Z \in\real^{p\times r}$ with IID $\N(0,1)$ entries. Also,  $\eta_i$ have IID standard normal entries. 

We use a one-factor model, so $r=1$, and work with sample size $n=500$ and dimension $p=300$. We report additional experiments with larger $p$ and non-Gaussian data in the Supplement. We take $\theta=\smash{\gamma^{1/2}\cdot s}$, for $s$ on a linear grid of $10$ points between $0.2$ and $6$ inclusive. This is the same simulation setup used in \cite{dobriban2017factor}.  We repeat the simulation $100$ times.


The results in Figure \ref{fig:PA}(a) show that PA and DPA  both
select the right number of factors as soon as the signal strength $s$ is larger than about $2$. This agrees with our theoretical predictions, since both
algorithms select the perceptible factors.  
The close match between PA and DPA confirms our view that PA
is estimating the upper edge of the spectrum.



A key advantage of DPA is its speed. PA and DPA both perform a first SVD of $X$ to compute the singular values $\sigma_k(X)$. This takes $O(np \min(n,p))$ flops. Afterwards, PA generates $n_{\perm}$ independent permutations $X_\pi$, and computes their SVD. This takes  $O(n_{\perm} \cdot np \min(n,p))$ flops. 
It would be reasonable to run both PA and DPA with a truncated
SVD algorithm computing only the first $\bar k$ singular values
where $\bar k$ is a problem specific a priori upper bound on $k$.
DPA would still have a cost advantage as that truncated SVD
would only need to be computed once versus $n_{\perm}+1$ times for PA.

After computing the SVD,
DPA only needs to compute $\upper(F_{\gamma_p, \hat  H_p})$, the  upper edge 
of the MP distribution. The fast method described in Section \ref{fast} has an approximate cost of $O(p)$ per iteration. We do not know whether the number of iterations required to achieve desired accuracy depends strongly on $p$.

In conclusion, the cost of DPA should be lower by a factor of the order of $n_{\perm}$ than that of PA. In empirical applications, the number of SVD computations is typically about 20, thus this represents a meaningful speedup. 

DPA is also much faster empirically. In Figure~\ref{fig:PA}(b), we report the results of a simulation where we increase the dimension $n$ and sample size $p$ keeping $p/n = \gamma = 0.6$, from $n=500$ to $n=3500$ in steps of 500. We use $n_{\perm} = 20$ permutations in FA. We only obtain one MC sample for each parameter setting.  We set the signal strength $\theta = 6\gamma^{1/2}$, so that the problem is ``easy'' statistically. Both methods are able to select the correct number of factors, regardless of the sample size (data not shown due to space limitations). Figure~\ref{fig:PA}(b) shows that DPA is much faster than PA. On a log-scale, we see that the improvement is between 10x and 15x. The expected improvement without the upper edge computation would be 20x, but the time needed for that computation reduces the improvement. However, DPA is still much faster than PA.

\subsection{DDPA versus DPA}

\begin{figure}
\begin{subfigure}{.5\textwidth}
  \centering
  \includegraphics[scale=0.43]{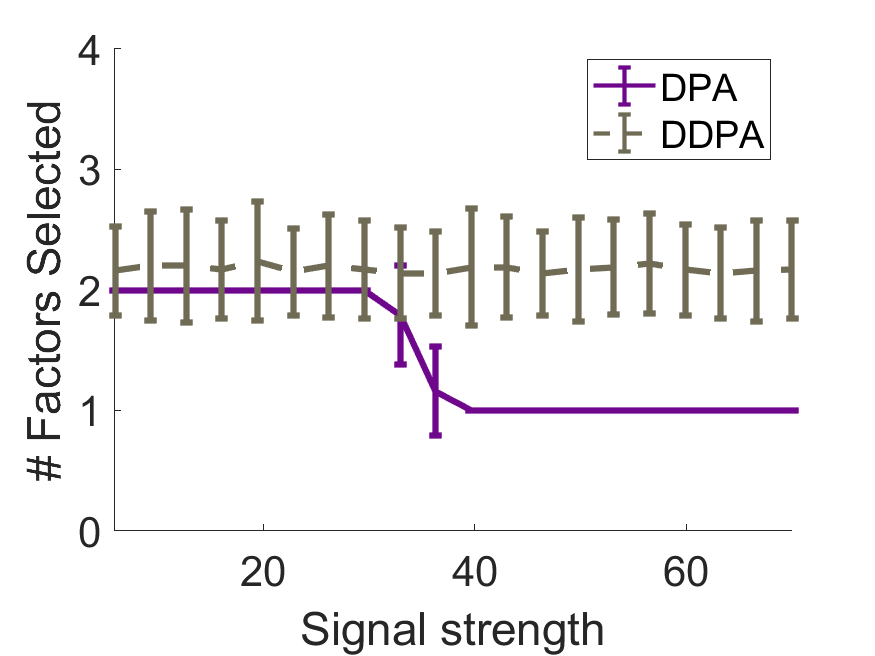}
  \caption{2-Factor model}
\end{subfigure}
\begin{subfigure}{.5\textwidth}
  \centering
  \includegraphics[scale=0.43]{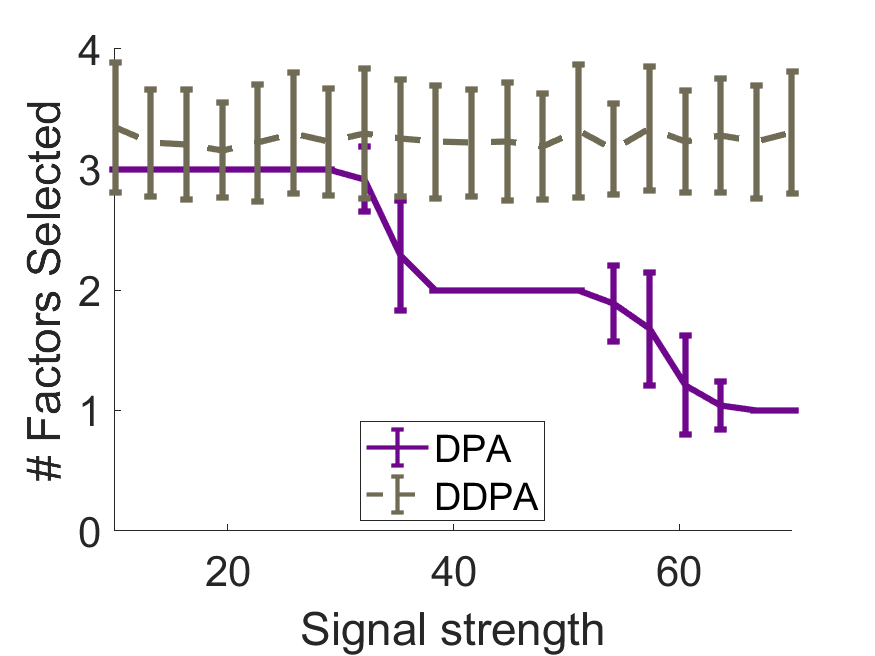}
  \caption{3-Factor model}
\end{subfigure}
\caption{Mean and $\pm 1$SD of the number of factors selected by DPA and DDPA as a function of the stronger factor value, with 2 and 3 factors.}
\label{fig:PA_shad}
\end{figure}

Here we show that DPA is affected by shadowing and that DDPA counters it.  We consider a two-factor and a three-factor model, with the same parameters as above, including heteroskedastic noise. For the two-factor model, we set the smaller signal strength to $\theta_1=6 \gamma^{1/2}$, which is a perceptible factor. We vary the larger factor strength $\theta_2=c_2 \gamma^{1/2}$ on a $20$ point grid from $c_2=6$ to $70$. This is the same simulation setup used in \cite{dobriban2017factor}.  The simulation was repeated $100$ times at each level. The results in Figure~\ref{fig:PA_shad}(a) show that DPA correctly finds both factors for small, but not for large, $\theta_2$. DPA begins to suffer from shadowing at $\theta_2\approx 30$ and shadowing is solidly in place by $\theta_2=40$.
In contrast, DDPA on average selects a constant number of factors, only slightly more than the true number, even when the larger factor is very strong. 
 
We see more scatter in the choices of $k$ by DDPA than by DPA.
Both DPA and DDPA are deterministic functions of $X$, but DDPA
subtracts a function of estimated singular vectors.
As we mentioned in Section~\ref{evec_estim}
those vectors are not always well estimated which could contribute
to the variance of DDPA. The SD for DPA is zero when all $100$ simulations picked the same $k$.

For the three-factor model, we set  $\theta_1=6 \gamma^{1/2}$, $\theta_2=10 \gamma^{1/2}$ and vary the largest factor $\theta_3=c_3 \gamma^{1/2}$ on the same $20$ point grid between $c_3=10$ and $70$. 
In Figure~\ref{fig:PA_shad}(b), we see the same behavior as for the two-factor model. 
For $\theta_3<30$ there is very little shadowing, by $\theta_3=40$ one factor
is being shadowed and by $\theta_3=50$ we start to see a second factor getting shadowed.
DDPA counters the shadowing, once again with more randomness than DPA has. In both cases $1$SD is approximately $0.5$ for DDPA.

\subsection{DDPA+ versus DDPA}

\begin{figure}
\begin{subfigure}{.5\textwidth}
  \centering
  \includegraphics[scale=0.43]{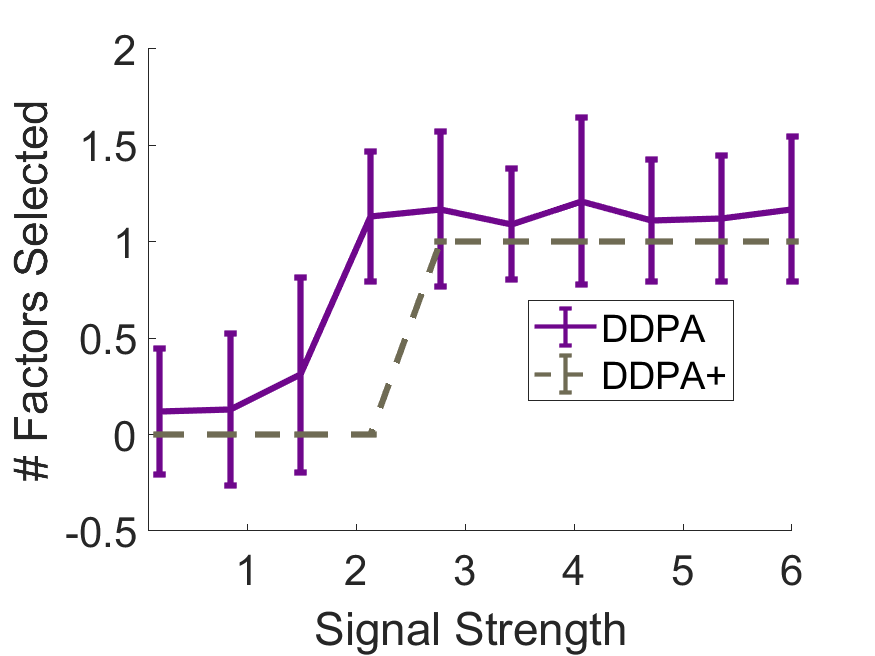}
  \caption{1-Factor model}
\end{subfigure}
\begin{subfigure}{.5\textwidth}
  \centering
  \includegraphics[scale=0.43]{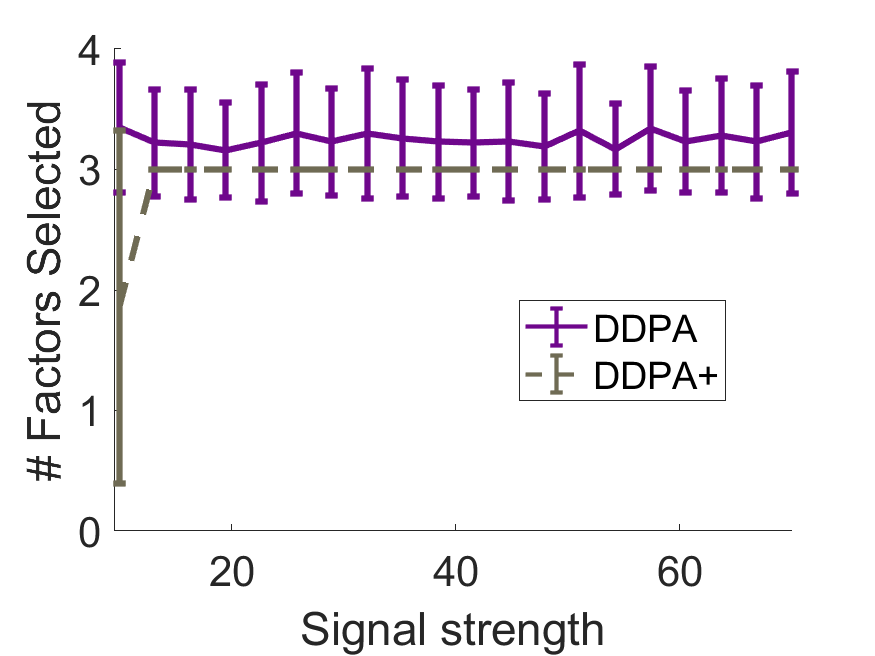}
  \caption{3-Factor model}
\end{subfigure}
\caption{Mean and $\pm 1$SD of the number of factors selected by DDPA as a function of stronger factor value, with 1 and 3 factors. With 
an increased threshold (DDPA+), and without (DDPA).}
\label{fig:PA_inc}
\end{figure}

Here we examine the behavior of DDPA+. We consider the same multifactor setting as in the previous section, now with a one-factor and a three-factor model. 

The results in Figure~\ref{fig:PA_inc}(a) show that the increased threshold
leads to some weak factors not being selected by DDPA+.  The threshold at
which a factor is included is increased compared to DDPA with
threshold, as expected.
For the three-factor model in Figure~\ref{fig:PA_inc}(b), increasing the threshold
counters the variability that we saw for DDPA in Figure~\ref{fig:PA_shad}(b).
In conclusion, increasing the threshold has the expected behavior.

Moreover, we see an increased variablity for DDPA+ in the leftmost setting on the right, when $\theta_3=\theta_2$. The problem is that the two population singular values are equal, and this is a problematic setting for choosing the number of factors. The problem disappears almost immediately
when $\theta_3$ is just barely larger than $\theta_2$ so then there is only a
small range of signal strengths where this happens.  We expect a very asymmetric
distribution of $\hat k$ there, so perhaps $\pm1$SD does not depict it well. This is a type of "shadowing from below".

\section{Data analysis}\label{sec:hgdp}
We consider the Human Genome Diversity Project (HGDP) dataset \cite[e.g.,][]{cann2002human,li2008worldwide}. The purpose of collecting this dataset was to evaluate the diversity in the patterns of genetic variation across the globe. We use the CEPH panel, in which SNP data was collected for 1043 samples representing 51 different populations from Africa, Europe, Asia, Oceania and the Americas. We obtained the data from \url{www.hagsc.org/hgdp/data/hgdp.zip}. We provide the data and processing pipeline on this paper's GitHub page. 

The data has $n=1043$ samples, and we focus on the $p=9730$ SNPs on chromosome 22. Thus we have an $n\times p$ data matrix $X$, where $X_{ij}\in\{0,1,2\}$ is the number of copies 
of the minor allele of SNP~$j$ in the genome of individual~$i$. We standardize the data SNP-wise, centering each SNP by its mean, and dividing by its standard error. For this step, we ignore missing values. Then,  we impute the missing values as zeroes, which are also equal to the mean of each SNP.
\begin{figure}
  \centering
  \includegraphics[scale=0.75]{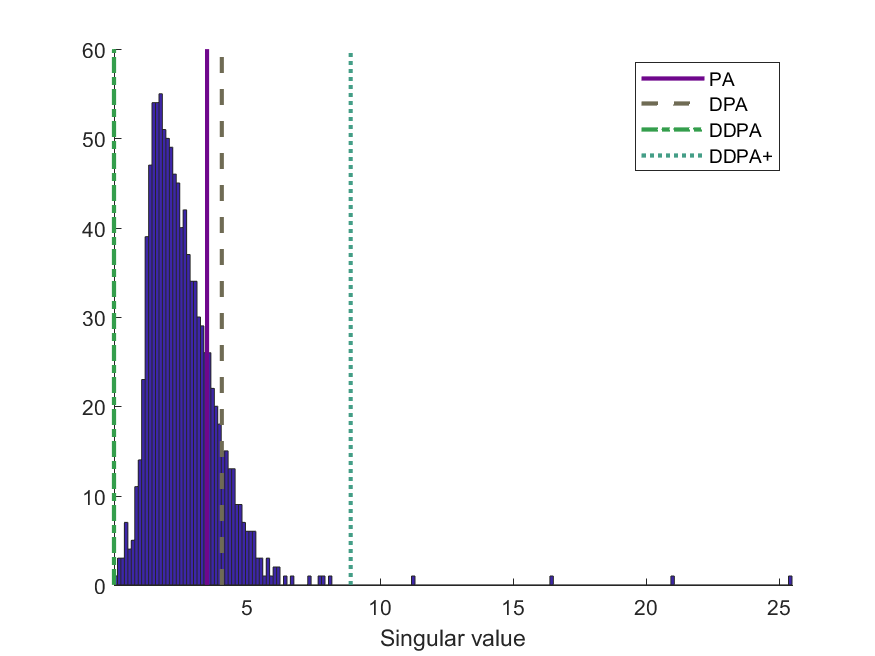}
\caption{Singular value histogram of the HGDP data. Overlaid  are the thresholds where selection of factors stops for the four methods.}
\label{fig:h}
\end{figure}

In Figure~\ref{fig:h}, we show the singular values of the HGDP data, and thresholds for selection of our methods: PA, DPA, DDPA and DDPA+. For PA, we use the sequential version, where the observed singular value must
exceed all $20$ permuted counterparts.
PA selects 212 factors, DPA 122, DDPA 1042, and DDPA+ 4. 


Because we have scaled the data, the generalized Marchenko-Pastur
distribution here reduces to the usual one.  The bulk edge of that
distribution is exceeded by all 122 factors selected by DPA.
PA selects even more factors.  The PA threshold is a noisy estimate
of the $1-212/9730\doteq 0.978$ quantile of the Marchenko-Pastur distribution.
Though their thresholds are not very far apart,  the density of sample singular values is 
high in that region, so PA and DPA pick quite different $k$.

Every time DDPA increases $k$ by one, the threshold it uses decreases and 
the cascade terminated with $k=n-1$ factors. This is the rank of our centered data matrix, and so the residual is zero.
There are 51 populations represented in the data and it is plausible that a very large
number of factors are present in the genome.  They cannot all be relatively important
and we certainly cannot estimate all those singular vectors accurately.
Restricting to well estimated singular vectors as DDPA+ does leads to $k=4$ factors.
The threshold is well separated from the one that would lead to $k=5$.


Since the ground truth is not known, we cannot be sure which $k$ is best.
We can however make a graphical examination.
Real factors commonly show some interesting structure, while we expect spurious factors
or poorly determined ones to appear Gaussian.
Of course, this is entirely heuristic. 
However, practitioners often attempt to interpret the PC estimators of the factors in exploratory FA, so this is a reasonable check. 
\cite{perry2010rotation} even base a test for factor structure on the non-Gaussianity
of the singular vectors, using random rotation matrices.

Figures \ref{fig:first} and \ref{fig:sec} show the left singular vectors of the data
corresponding to the top singular values. 
We see a clear clustering structure in at least the first $8$ PCs. 
DDPA+ selects only $4$ factors which we interpret as being conservative about
estimating the singular vectors beyond the $4$'th.
While it is visually apparent that there is some structure in eigenvectors
$5$ through $8$, DDPA+ stopped because that structure could not
be confidently estimated.
There is much less structure in the PCs beyond the $8$'th.
Thus, we believe that PA and DPA select many more factors than can be well
estimated from this data even if that many factors exist.
DDPA failed completely on this data, unlike in our simulations, where it was merely variable.
Here, the histogram of the bottom singular values never
looked like a generalized Marchenko-Pastur. We think DDPA+ is more robust.

To be clear, we think that the main reason why DDPA does not work is that there is a lot of clustering in the data. There are samples from several different populations from
Europe, Asia, and Africa. Within each continent, there are samples from different regions, and within those there are samples from different countries.  The IID assumption is not accurate. Instead, it would be more accurate to use a hierarchical mixture model. Simulations in the appendix show how DDPA fails under such a model. However, developing methods for such models is beyond our current scope. 

%
%

\newcommand{\FW}{0.47\textwidth}
\newcommand{\TRA}{0}
\newcommand{\TRB}{0}
\newcommand{\TRC}{0}
\newcommand{\TRD}{0}
\newcommand{\PBW}{0.3}

\begin{figure}[p]
\centering
\begin{tabular}{ccc}
{\begin{sideways}\parbox{\PBW\columnwidth}{\centering $1$--$4$}\end{sideways}} &
\includegraphics[width=\FW, trim = \TRA mm \TRB mm \TRC mm \TRD mm, clip = TRUE]{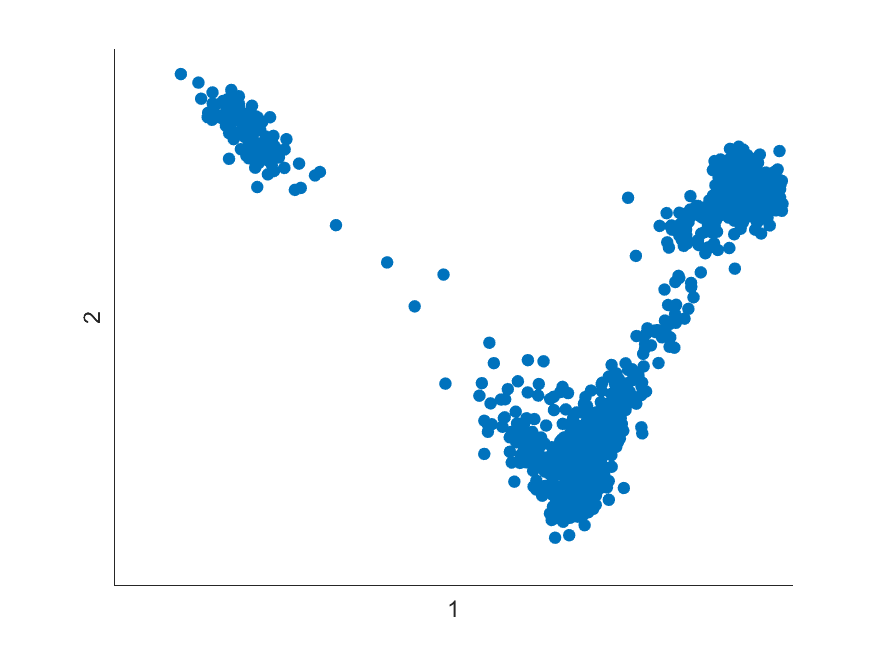} &
\includegraphics[width=\FW, trim = \TRA mm \TRB mm \TRC mm \TRD mm, clip = TRUE]{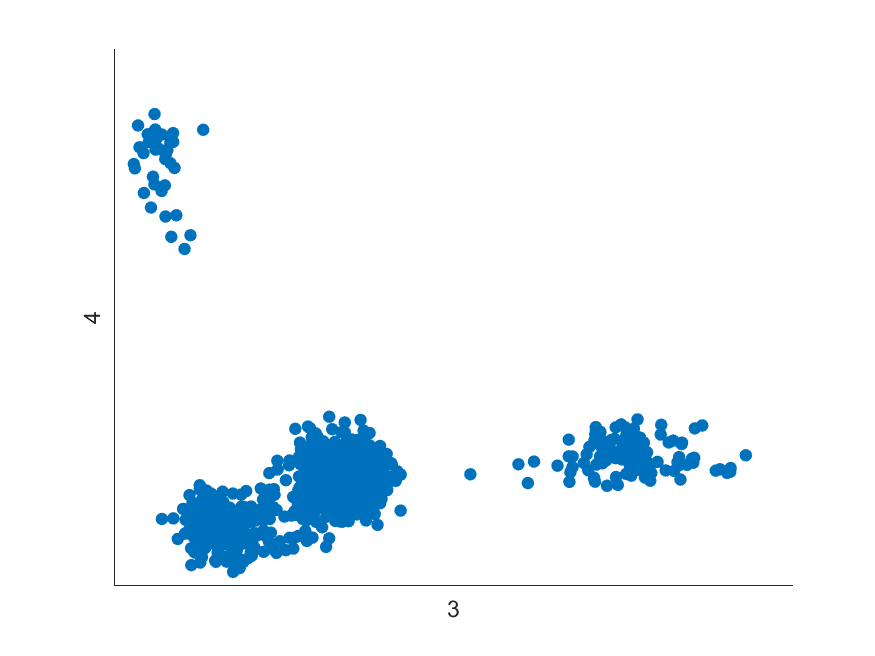} \\
{\begin{sideways}\parbox{\PBW\columnwidth}{\centering $5$--$8$}\end{sideways}} &
\includegraphics[width=\FW, trim = \TRA mm \TRB mm \TRC mm \TRD mm, clip = TRUE]{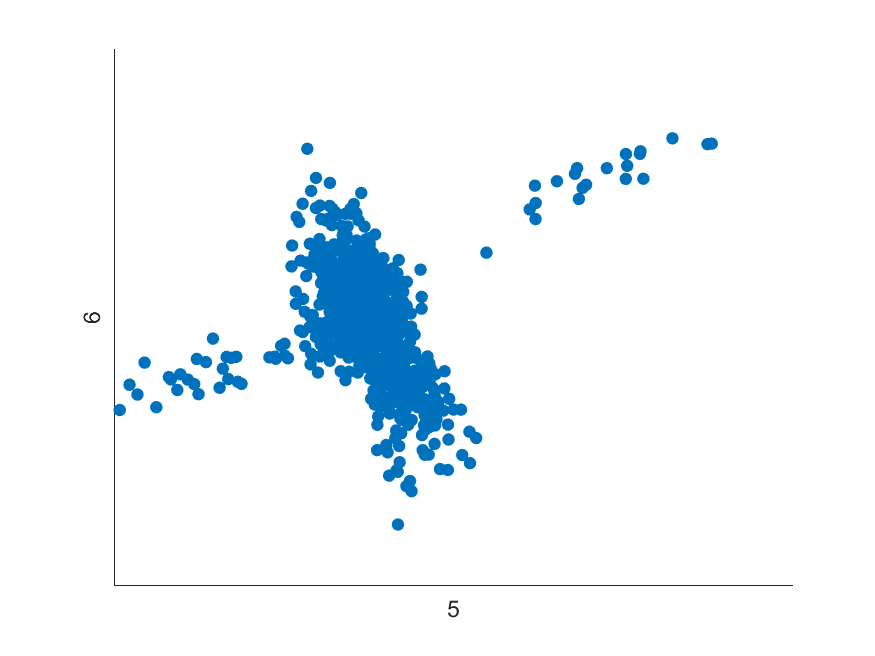} &
\includegraphics[width=\FW, trim = \TRA mm \TRB mm \TRC mm \TRD mm, clip = TRUE]{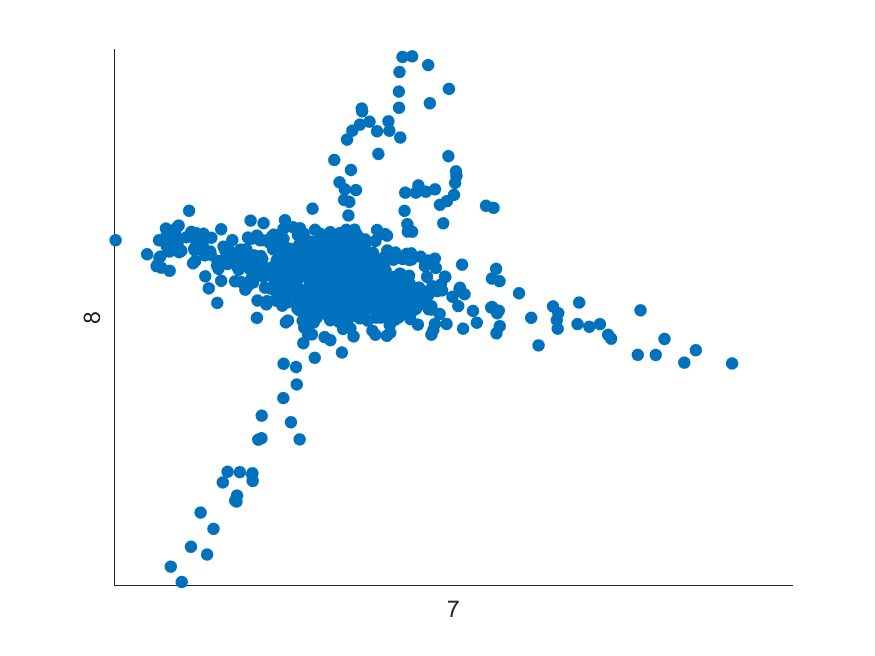}  \\
{\begin{sideways}\parbox{\PBW\columnwidth}{\centering $9$--$12$}\end{sideways}} &
\includegraphics[width=\FW, trim = \TRA mm \TRB mm \TRC mm \TRD mm, clip = TRUE]{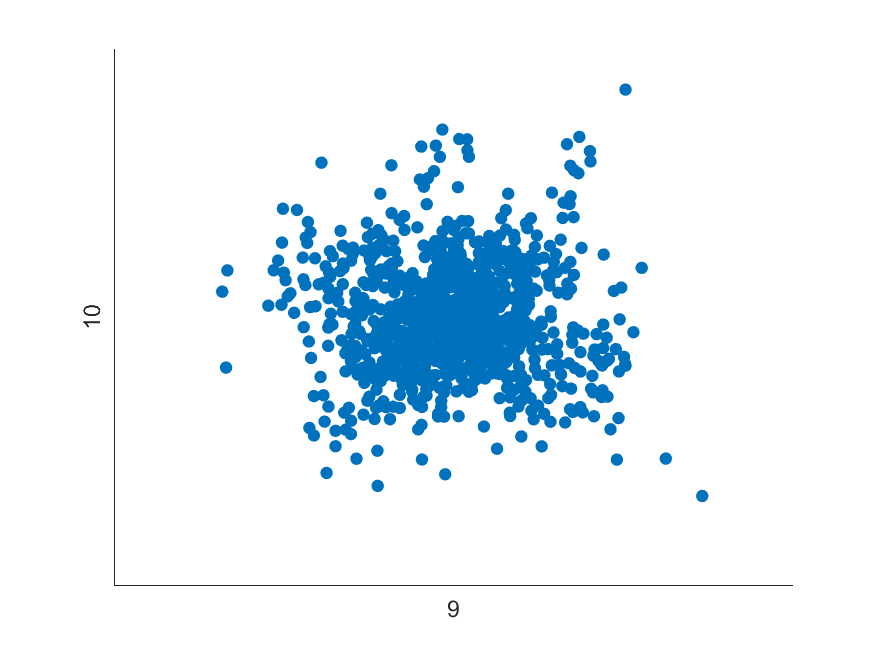} &
\includegraphics[width=\FW, trim = \TRA mm \TRB mm \TRC mm \TRD mm, clip = TRUE]{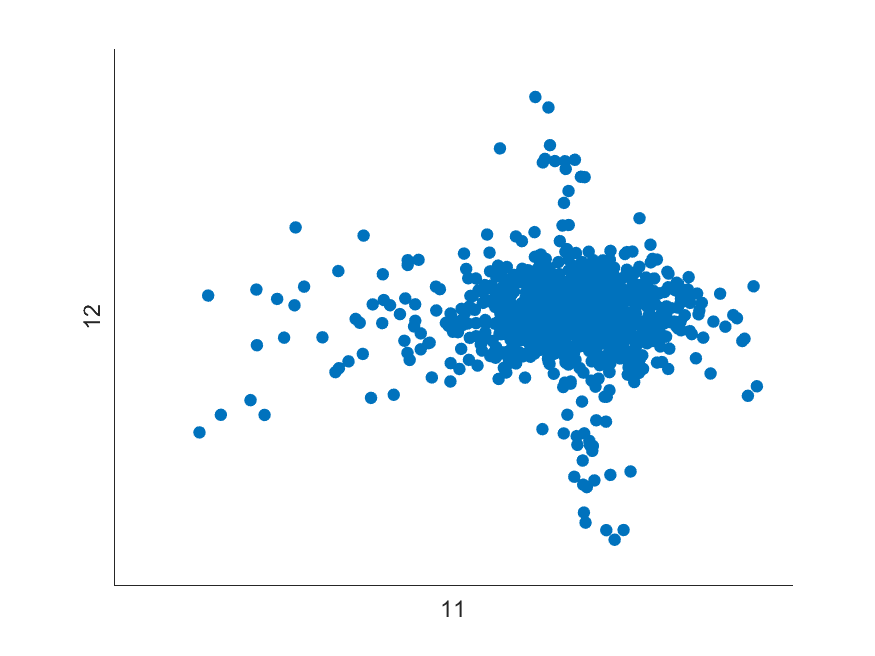} \\
\end{tabular}
\caption{Scatterplots of the left singular vectors 1--12 of the HGDP data. Each point represents a sample. }
\label{fig:first}
\end{figure}

\begin{figure}[p]
\centering
\begin{tabular}{ccc}
{\begin{sideways}\parbox{\PBW\columnwidth}{\centering $13$--$16$}\end{sideways}} &
\includegraphics[width=\FW, trim = \TRA mm \TRB mm \TRC mm \TRD mm, clip = TRUE]{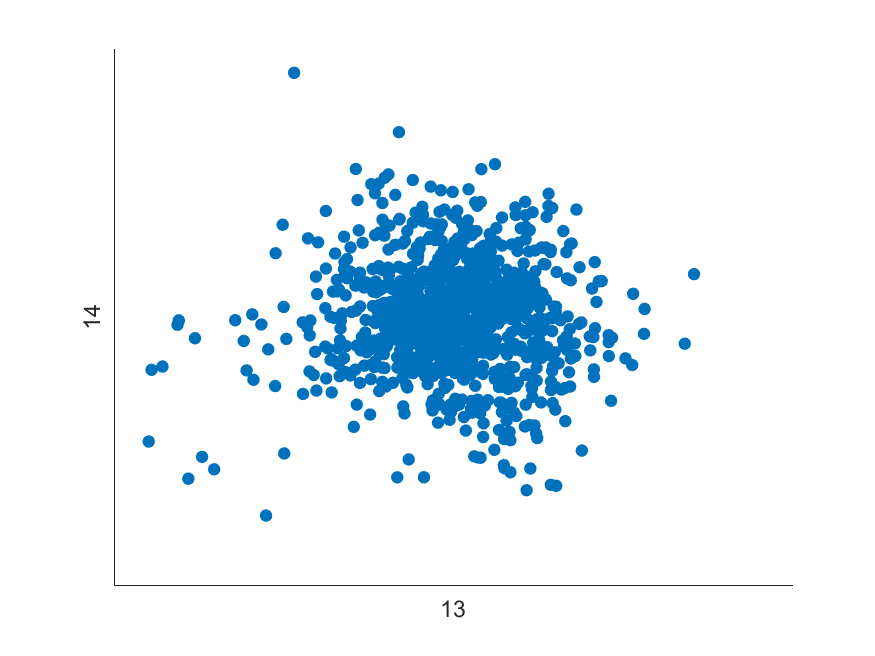} &
\includegraphics[width=\FW, trim = \TRA mm \TRB mm \TRC mm \TRD mm, clip = TRUE]{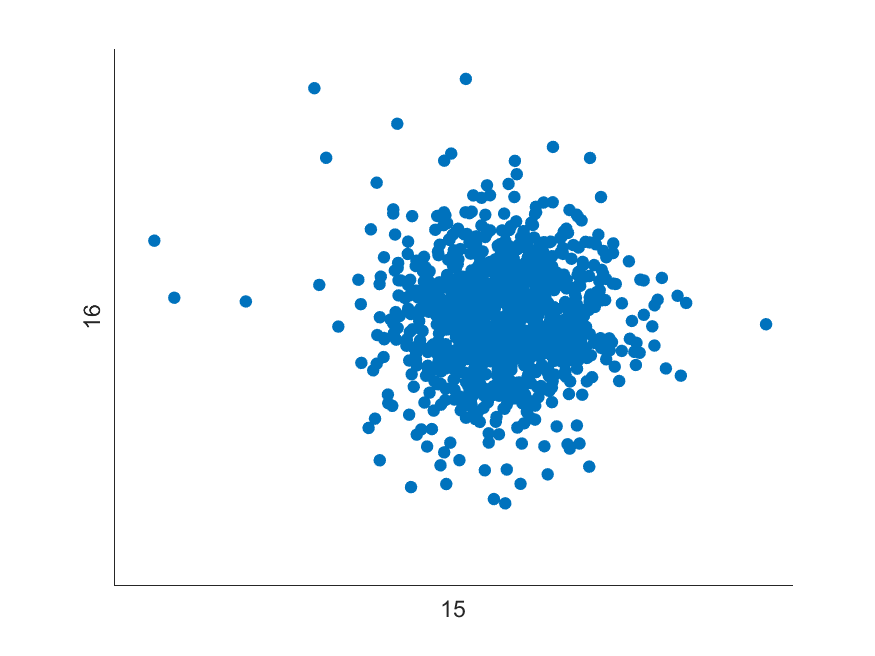} \\
{\begin{sideways}\parbox{\PBW\columnwidth}{\centering $17$--$20$}\end{sideways}} &
\includegraphics[width=\FW, trim = \TRA mm \TRB mm \TRC mm \TRD mm, clip = TRUE]{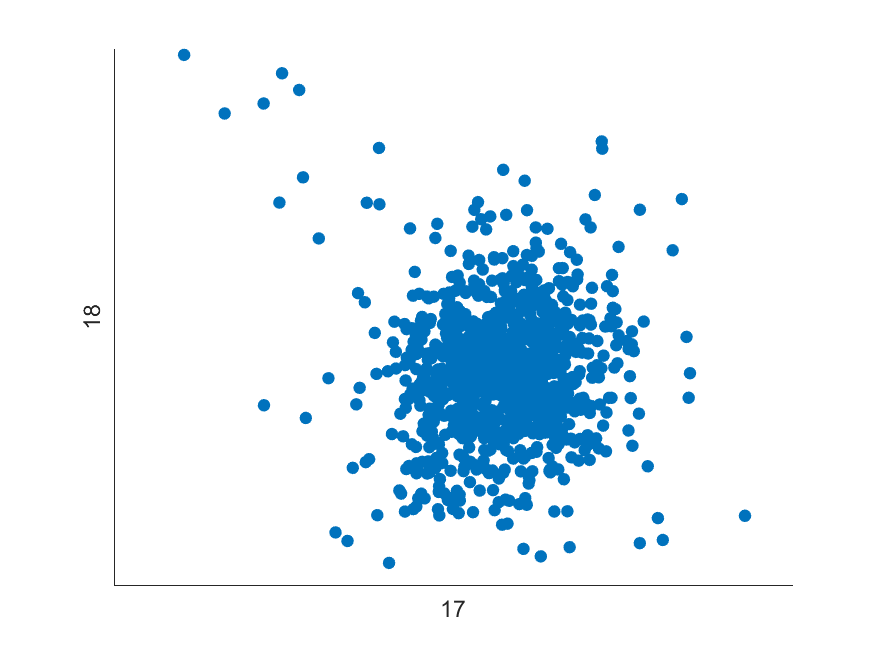} &
\includegraphics[width=\FW, trim = \TRA mm \TRB mm \TRC mm \TRD mm, clip = TRUE]{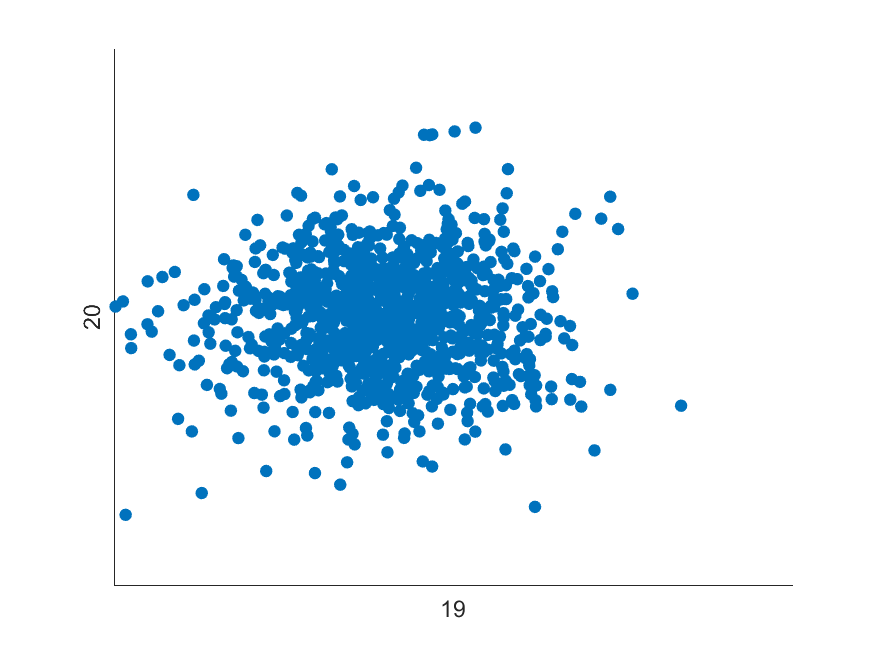}  \\
{\begin{sideways}\parbox{\PBW\columnwidth}{\centering $21$--$24$}\end{sideways}} &
\includegraphics[width=\FW, trim = \TRA mm \TRB mm \TRC mm \TRD mm, clip = TRUE]{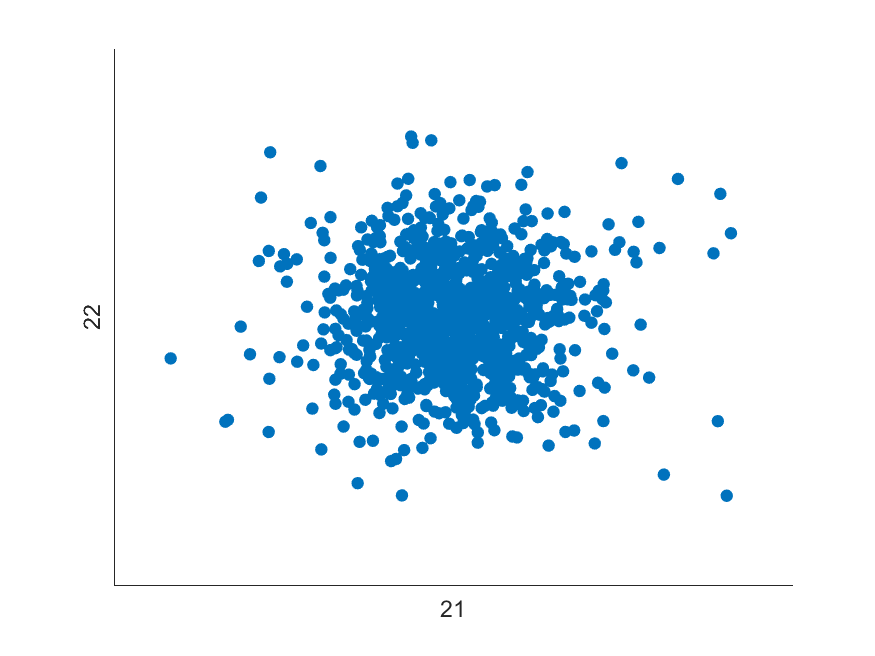} &
\includegraphics[width=\FW, trim = \TRA mm \TRB mm \TRC mm \TRD mm, clip = TRUE]{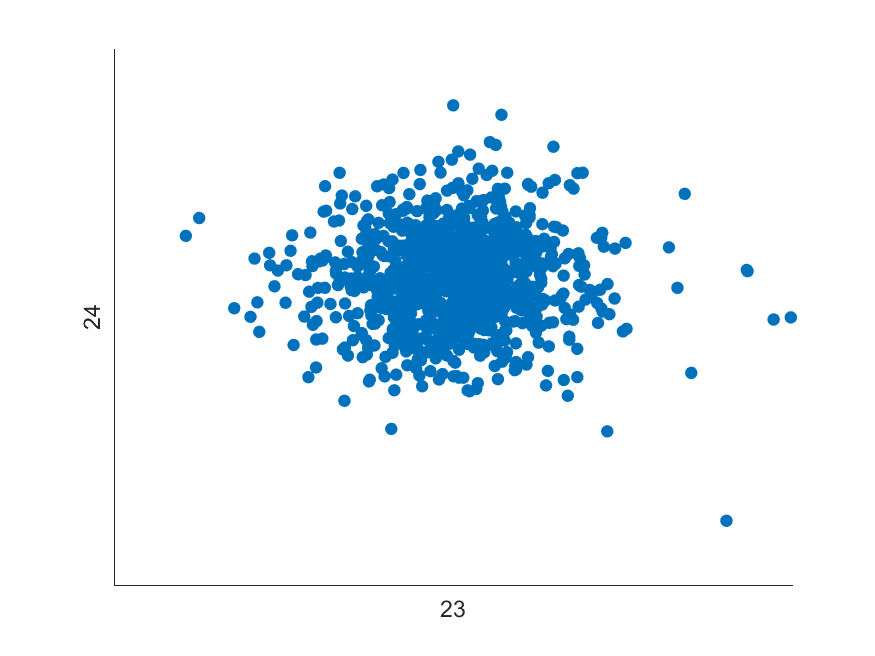} \\
\end{tabular}
\caption{Scatterplots of the left singular vectors 13--24 of the HGDP data.}
\label{fig:sec}
\end{figure}

\section{Computing the thresholds}\label{sec:getthresh}

In this section we describe how our thresholds are computed.
First we consider a fast method to compute the DPA threshold.
Then we describe how to compute the DDPA+ threshold.

\subsection{Fast computation of the upper edge}\label{sec:fast_edge}
\label{fast}
We describe a fast method to compute the upper edge $\upper(F_{\gamma, H})$ of the Marchenko-Pastur distribution, given the population spectrum $H$ and the aspect ratio $\gamma$. Our method described here is, implicitly, one of the steps of the Spectrode method, which computes the density of the whole MP distribution $F_{\gamma, H}$  \citep{dobriban2015efficient}. However, this subproblem was not considered separately in \cite{dobriban2015efficient}, and thus calling the Spectrode method can be inefficient when we only need the upper edge. 

We are given a probability distribution $H = \sum_{j=1}^p w_j \delta_{\Phi_j}$ which is a mixture of point masses at $\Phi_j\ge 0$, with $w_j\ge0$ and $\sum_{j=1}^p w_j = 1$. We are also given the aspect ratio $\gamma>0$. As a consequence of the results in \cite{silverstein1995analysis}, see e.g., \citet[Lemma 6.2]{bai2009spectral}, the upper edge $\upper(F_{\gamma, H})$ has the following characterization. Let $B = -1/\max_j \Phi_j$, and assume that $H \neq \delta_0$, so that $B<0$ is well-defined. Consider the map 
$$z(v) = -\frac{1}{v}  + \gamma\sum_{j=1}^{p} \frac{w_j\Phi_j}{1 + \Phi_jv}.$$
On the interval $(B,0)$, $z$ has a unique minimum $v^*$. Then, $\upper(F_{\gamma, H}) = z(v^*)$. Moreover, $z$ is strictly convex on that interval, and asymptotes to $+\infty$ at both endpoints. This enables finding $v^*$ numerically by solving $z'(v)=0$ using any one-dimensional root finding algorithm, such as bisection 
or Brent's method \citep{bren:1973}.  

\subsection{The DDPA+ threshold}\label{pg}


For simplicity we consider discerning whether our $X$ from which $k-1$ factors have been subtracted
contains an additional $k$'th well estimated factor or not.
Let $n^{-1/2}X = \theta ab^\top +N$, where $S = \theta ab^\top$ is the signal, with unit vectors $a$, $b$, and $N$ is the noise. Let $X = \sum_{i=1}^r\sigma_i u_i v_i^\top$, where $r=\min(n,p)$, be the SVD of~$X$. 

We compare two estimators of $S$: the first empirical singular matrix $\hat S = \sigma_1 u_1 v_1^\top$, and the $n\times p$ zero matrix $0_{n\times p}$, which corresponds to a hard thresholding estimator on the first singular value. 
Now
$\hat S$ is  more accurate than $0$ in mean squared error, when
$$\Vert\sigma_1u_1v_1^\top - \theta a b^\top\Vert_{F}^2< \Vert\theta a b^\top\Vert_{F}^2.$$
By expanding the square, isolating $\sigma_1$, and squaring, the criterion becomes
\begin{align}\label{eq:criterion}
\sigma_1^2 <4\theta^2 \cdot (u_1^\top a)^2\cdot (v_1^\top b)^2.
\end{align}

The theory of spiked covariance models \citep[e.g.,][]{benaych2012singular}, 
shows that as $n,p \to \infty$, under the conditions of the theorems in our paper and for fixed factor strengths, the empirical spike singular value $\sigma_1$ and the empirical squared cosines $(u_1^\top a)^2,(v_1^\top b)^2$ have a.s. limits. Moreover, the OptShrink method \citep{nadakuditi2014optshrink} provides consistent estimators of the unknown population spike $\theta^2$, and the squared cosines $(u_1^\top a)^2,(v_1^\top b)^2$. OptShrink relies on the empirical singular values $\sigma_i$, $i=1,\ldots,r$. To estimate the unknown population spike $\theta^2$, it uses the singular values $\sigma_i$, $i=m,\ldots,r$, for some prespecified $m$. Here we take $m=2$. 

The specific formulas can be found in \cite{benaych2012singular}. We provide them here for completeness
and  they are also in the code at \url{github.com/dobriban/DPA}. The full algorithm is stated separately in the paper. 
Let $\lambda = \sigma_1^2$, and $\lambda_i = \sigma_i^2$, $i=2\ldots,r$, where $r = \min(n,p)$. Then, let $m = (r-1)^{-1} \sum_i (\lambda_i - \lambda)^{-1}$, $v = \gamma m -(1-\gamma)/\lambda$, $D = \lambda m v$, $\ell = 1/D$, $m' = (r-1)^{-1} \sum_i (\lambda_i - \lambda)^{-2}$, $v' = \gamma m' +(1-\gamma)/\lambda^2$, and $D' = mv+\lambda(mv'+m'v)$. The estimators of the squared cosines between the true and empirical left and right singular vectors are, respectively, $c_r^2 = m/(D'\ell)$, $c_l^2 = v/(D'\ell)$.

\section{Discussion}\label{sec:disc}

In this paper we have developed a deterministic counterpart, DPA, to the popular PA
method for selecting the number of factors.  DPA is completely deterministic
and hence reproducible. It is also much faster than PA.  Both of these traits make it
a suitable replacement for PA when factor analysis is used as one step in a lengthy
data analysis workflow.
Under standard factor analysis assumptions, DPA inherits PA's vanishing probability
of selecting false positives proved in \cite{dobriban2017factor}.  

PA and DPA have a shadowing problem that we can remove by deflation. The resulting DDPA
algorithm counters deflation but chooses a more random number of factors because
it subtracts poorly estimated singular vectors. We counter this in our DDPA+ algorithm
by raising the threshold to only include factors whose singular vectors are well determined.


Factor analysis is used in data analysis pipelines for multiple purposes, including factor interpretation,
data denoising, and data decorrelation.  The best choice of $k$ is probably different for each of
these uses.  Denoising 
has been well studied but we think more work is needed for the other goals.

\section*{Acknowledgments}

This work was supported by the U.S.\ NSF under grants
DMS-1521145 and DMS-1407397. The authors are grateful to Jeha Yang for pointing out an error in an earlier version of the manuscript.

\section{Proof outlines}\label{sec:supp}

In this section we provide some proof outlines. The remaining proofs are in the Appendix in the Supplementary Material. 

\subsection{Proof of Theorem \ref{thm:fa_cons}}
\label{fa_cons_pf}

\subsubsection{General theory}

Recall that $n^{-1/2}X = S+N$ for signal and noise matrices $S$ and $N$, where $\Vert N\Vert_2\to b>0$ a.s.
We begin with a technical lemma that provides a simple condition for DPA to select the perceptible factors.

\begin{lemma}[DPA selects the perceptible factors] 
If $\upper(F_{\gamma_p, \hat H_p})^{1/2} \to b\in(0,\infty)$ almost surely,
then DPA selects all perceptible factors
and no imperceptible factors, almost surely. 
\label{lem:DPA_sig}
\end{lemma}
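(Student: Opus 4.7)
My plan is to reduce the claim to a simple comparison of almost-sure limits of the singular values with the almost-sure limit of the DPA threshold, using the definitions of perceptible and imperceptible factors together with the monotonicity of the singular values.

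First, I would unpack the DPA selection rule: recalling that the singular values $\sigma_1(n^{-1/2}X) \ge \sigma_2(n^{-1/2}X) \ge \cdots$ are in decreasing order, the index $k$ chosen by DPA is the largest integer for which
\[
\sigma_k(n^{-1/2}X) > \upper(F_{\gamma_p,\hat H_p})^{1/2}.
\]
So the statement to prove splits into two assertions: every perceptible index $\ell$ satisfies $\sigma_\ell(n^{-1/2}X) > \upper(F_{\gamma_p,\hat H_p})^{1/2}$ almost surely, and every imperceptible index $\ell$ satisfies the reverse inequality almost surely.

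The second step is to invoke the hypothesis. By assumption $\upper(F_{\gamma_p,\hat H_p})^{1/2} \to b$ a.s. Fix $\eps>0$ such that a given perceptible factor $\ell$ satisfies $\sigma_\ell(n^{-1/2}X) > b+\eps$ a.s.\ eventually. Then on the event of full probability where both convergences hold, the threshold is eventually at most $b+\eps/2$, while $\sigma_\ell(n^{-1/2}X) \ge b+\eps$, giving the strict inequality needed for selection. The symmetric argument, using $\sigma_\ell(n^{-1/2}X) < b-\eps$ for an imperceptible $\ell$ and the threshold being eventually at least $b-\eps/2$, gives non-selection. Since the number of perceptible and imperceptible factors is at most $r$, taking a countable (or finite) intersection of full-probability events handles all factors simultaneously.

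The last step is a brief monotonicity observation: because singular values are nonincreasing in the index, the set of indices whose singular value exceeds the threshold is automatically of the form $\{1,\dots,\hat k\}$, so the two inequalities established above exactly identify the cutoff $\hat k$ chosen by DPA as lying between the largest perceptible index and the smallest imperceptible index. I do not foresee a real obstacle: the lemma is essentially a bookkeeping consequence of the definitions once the threshold's a.s.\ limit is given, and all technical work is deferred to proving that limit (which is where the real content of Theorem~\ref{thm:fa_cons} lives).
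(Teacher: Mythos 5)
Your proposal is correct and is exactly the argument the paper has in mind: the paper simply declares the proof of Lemma~\ref{lem:DPA_sig} ``immediate,'' and your write-up supplies the intended bookkeeping (compare each singular value's a.s.\ limit to the a.s.\ limit $b$ of the threshold, using the $\eps$-gaps in the definitions of perceptible and imperceptible, plus monotonicity of the singular values to identify the cutoff).
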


%
The proof is immediate.  In the remainder, we will provide concrete conditions when $\upper(F_{\gamma_p, \hat  H_p})^{1/2} \to b$ holds. Recall that the noise has the form $N = n^{-1/2} Z \Phi^{1/2}$, where the entries of $Z$ are independent random variables of mean zero and variance one.  Then the true (population) variances are the entries of the diagonal matrix $\Phi$, and we let $H_p =\esd(\Phi)$. 

\begin{prop}[Noise operator norm]
\label{prop:noise_op_norm}
Let $N = n^{-1/2} Z \Phi^{1/2}$,  where the entries of $Z$ are independent standardized random variables with bounded $6+\eps$-th moment for some $\eps>0$, and 
$\Phi\in\real^{p\times p}$ is a diagonal positive semi-definite matrix with $\max_j\Phi_{j}\le C<\infty$.
Let $p\to\infty$ with $p/n \to \gamma >0$, while $\esd(\Phi)\to H$
and $\max_i\Phi_i\to \upper(H)$ where the limiting distribution
has $\upper(H)<\infty$.
Then $\Vert N\Vert_2 \to \upper(F_{H,\gamma})^{1/2}$ almost surely.
\end{prop}
\begin{proof}
This essentially follows from Corollary 6.6 of \cite{bai2009spectral}. 
A small modification is needed to deal with the non-IID-ness, as explained in \cite{dobriban2017optimal}.
\end{proof}

Using Proposition~\ref{prop:noise_op_norm}
we can ensure that $\upper(F_{\gamma_p, \hat  H_p})^{1/2} \to b$, as required in
Lemma~\ref{lem:DPA_sig}, by showing that $\upper(F_{\gamma_p, \hat  H_p}) \to \upper(F_{\gamma, H})$. We now turn to this. 

\subsubsection{Upper edge of MP law}
In this section, we will provide conditions under which
$\upper(F_{\gamma_p, \hat  H_p}) \to \upper(F_{\gamma, H})$. This statement requires a delicate new analysis of the MP law, going back to the first principles of the Marchenko-Pastur-Silverstein equation. 

\begin{theorem}[Convergence of upper edge] 
Suppose that as $n,p\to\infty$,
$\gamma_p=p/n\to\gamma\in(0,\infty)$,
$\hat  H_p \tod H$, and
$\upper(\hat  H_p) \to \upper(H)<\infty$.
Then  $\upper(F_{\gamma_p, \hat  H_p}) \to \upper(F_{\gamma, H})$.
\label{thm:upper_edge_conv}
\end{theorem}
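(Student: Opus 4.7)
The plan is to exploit the variational characterization of the upper edge recalled in Section~\ref{sec:fast_edge}. For any probability distribution $K$ on $[0,\infty)$ with $K\neq\delta_0$ and any $\gamma>0$, set
\begin{align*}
z_{K,\gamma}(v) \;=\; -\frac{1}{v} + \gamma \int \frac{t}{1+tv}\, dK(t), \qquad
z'_{K,\gamma}(v) \;=\; \frac{1}{v^2} - \gamma \int \frac{t^2}{(1+tv)^2}\, dK(t),
\end{align*}
for $v\in(B_K,0)$ with $B_K=-1/\upper(K)$. Silverstein's theory gives $\upper(F_{\gamma,K})=z_{K,\gamma}(v_K^*)$, where $v_K^*$ is the unique interior minimizer of the strictly convex $z_{K,\gamma}$, equivalently the unique zero of the strictly increasing $z'_{K,\gamma}$. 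The strategy is to show $v_{\hat H_p}^*\to v_H^*$ and then conclude by continuity.

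First, one establishes uniform convergence of $z_{\hat H_p,\gamma_p}$ and $z'_{\hat H_p,\gamma_p}$ to $z_{H,\gamma}$ and $z'_{H,\gamma}$ on any compact subinterval $[a,b]\subset(B_H,0)$. Since $\upper(\hat H_p)\to\upper(H)<\infty$, for any small $\eta>0$ the support of $\hat H_p$ is eventually contained in $[0,\upper(H)+\eta]$, and $B_{\hat H_p}\to B_H$. For $\eta$ small enough, $1+tv$ is uniformly bounded away from $0$ on $[0,\upper(H)+\eta]\times[a,b]$, so the integrands $t/(1+tv)$ and $t^2/(1+tv)^2$ are continuous and uniformly bounded on this rectangle. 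Weak convergence $\hat H_p\tod H$, $\gamma_p\to\gamma$, together with an equicontinuity-in-$v$ argument, then yield the required uniform convergence.

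The key technical step is equi-coercivity: the minimizers $v_{\hat H_p}^*$ must eventually lie in a fixed compact subinterval of $(B_H,0)$. Choose $a=B_H+(v_H^*-B_H)/2$ and $b=v_H^*/2$, so that $v_H^*\in(a,b)$. Strict convexity of $z_{H,\gamma}$ forces $z'_{H,\gamma}(a)<0$ and $z'_{H,\gamma}(b)>0$, and pointwise convergence at these two points then gives $z'_{\hat H_p,\gamma_p}(a)<0$ and $z'_{\hat H_p,\gamma_p}(b)>0$ for all large $p$. Strict convexity of $z_{\hat H_p,\gamma_p}$ therefore places its unique zero $v_{\hat H_p}^*$ in $(a,b)$. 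The main obstacle here is verifying Silverstein's characterization for general $H$, i.e., ensuring that $v_H^*$ is a genuine interior minimizer with $B_H<v_H^*<0$; this ultimately rests on $z'_{H,\gamma}(v)\to-\infty$ as $v\to B_H^+$, which follows from a lower bound on $\int t^2/(1+tv)^2\,dH(t)$ using the mass of $H$ near $\upper(H)$, and requires care to handle both atomic and continuous cases uniformly.

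With equi-coercivity in hand the conclusion is routine. Every subsequence of $\{v_{\hat H_p}^*\}$ has a further subsequence converging to some $v^\dagger\in[a,b]$ by compactness. Uniform convergence of $z'_{\hat H_p,\gamma_p}$ on $[a,b]$ combined with $z'_{\hat H_p,\gamma_p}(v_{\hat H_p}^*)=0$ gives $z'_{H,\gamma}(v^\dagger)=0$, so $v^\dagger=v_H^*$ by uniqueness. Hence $v_{\hat H_p}^*\to v_H^*$ along the full sequence, and uniform convergence of $z_{\hat H_p,\gamma_p}$ at this limit delivers $\upper(F_{\gamma_p,\hat H_p})=z_{\hat H_p,\gamma_p}(v_{\hat H_p}^*)\to z_{H,\gamma}(v_H^*)=\upper(F_{\gamma,H})$, as desired.
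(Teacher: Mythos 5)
Your proof is correct and follows essentially the same route as the paper's: both rest on Silverstein's characterization of the upper edge as the value of the inverse map $z\mapsto v$ at its interior critical point, and both reduce the problem to uniform convergence of that map on compact subintervals of $(-1/\upper(H),0)$, using $\upper(\hat H_p)\to\upper(H)$ to keep the integrands bounded. The only difference is one of explicitness: you spell out the localization of the minimizers via the sign of $z'$ at two test points and a subsequence argument, a step the paper leaves implicit by appealing to the known monotonicity structure of $G$ from Bai and Silverstein.
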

\begin{proof}
See the supplement. 
\end{proof}

Theorem \ref{thm:upper_edge_conv} provides sufficient conditions guaranteeing the convergence of the MP upper edge. For normalized data 
$n^{-1/2}X = S+N$ with $N=n^{-1/2} Z\Phi^{1/2}$, as in the setting of Proposition~\ref{prop:noise_op_norm}, these conditions are satisfied provided that the signal columns of $S$ have vanishing norms.

\begin{prop}\label{prop:ecdf_conv}
Let the scaled data be $n^{-1/2}X = S+N$ 
with noise $N=n^{-1/2} Z\Phi^{1/2}$.
Let $\Phi$ and $Z$ obey the conditions of Proposition~\ref{prop:noise_op_norm}
except that the moment condition on $Z$ is increased from $6+\eps$
to $8+\eps$ bounded moments for some $\eps>0$.
Let $S$ have columns $s_1,\dots,s_p$, and 
suppose that $\max_j \Vert s_j\Vert_2 \to 0$. Then $\hat  H_p \tod H$ and $\upper(\hat  H_p) \to \upper(H)$.
\end{prop}
\begin{proof}
See the supplement. 
\end{proof}

Theorem \ref{thm:upper_edge_conv} and Proposition \ref{prop:ecdf_conv} provide a concrete set of assumptions that can be used to prove our main result for factor models. 

\subsubsection{Finishing the proof of Theorem \ref{thm:fa_cons}}

It remains to check that the conditions of Propositions 
\ref{prop:noise_op_norm} and \ref{prop:ecdf_conv}
hold. In matrix form, the factor model reads $X = U \Psi^{1/2}\Lambda^\top +Z\Phi^{1/2}$. We first normalize it to have operator norm of unit order: $n^{-1/2}X = n^{-1/2}U \Psi^{1/2}\Lambda^\top +n^{-1/2}Z\Phi^{1/2}$. 

We need to verify conditions on signal and noise.
The assumption about the idiosyncratic terms in Theorem~\ref{thm:fa_cons}
satisfies the conditions of  Proposition \ref{prop:ecdf_conv} which
are stricter than those of  Proposition \ref{prop:noise_op_norm}.
So the noise conditions are satisfied by assumption. 

Now we turn to the signal $S$.  It has columns $s_1,\dots,s_p$. For Proposition \ref{prop:noise_op_norm}, we need to show
that $M=\max_j\Vert s_j\Vert\to0$.
Let $\Lambda\Psi^{1/2} = [d_1,\ldots,d_r]$.  
Then
$$S=n^{-1/2}  U \Psi^{1/2}\Lambda^\top
= n^{-1/2} \sum_{\ell=1}^r u_\ell d_\ell^\top.$$ 
Now $s_j = n^{-1/2} \sum_{\ell=1}^r u_\ell d_\ell(j)$, where $d_\ell(j)$ denotes the $j$-th entry of $d_\ell$. 
Thus 
$$\Vert s_j\Vert \le n^{-1/2}   \sum_{\ell=1}^r \Vert u_\ell\Vert \cdot|d_\ell(j)|
\le n^{-1/2}  \max_{1\le\ell\le r} \Vert u_\ell\Vert \cdot \max_{1\le j\le p}   \sum_{\ell=1}^r |d_\ell(j)|.$$  
Next, by definition $\max_j     \sum_{\ell=1}^r |d_\ell(j)| = \Vert\Lambda\Psi^{1/2}\Vert_\infty$ which approaches $0$
by condition 5 on factor loadings in Theorem~\ref{thm:fa_cons}.
It now suffices to show that
$\max_\ell \Vert u_\ell\Vert = O(n^{1/2})$.

Each $u_\ell$ has IID entries with mean 0 and variance 1, so 
$n^{-1/2} \Vert u_\ell\Vert \to 1$ a.s.\ by the LLN. 
The problem is to guarantee that \emph{the maximum} of $r$ such random variables also converges to 1. If $r$ is fixed, then this is clear. A more careful analysis allows  $r$ to grow subject to the limits in Assumption 4 of the theorem,
and using the bounded moments of order $\nu=4+\delta$ for the
entries of $U$ given by assumption 1.
Specifically, with 
$M_u(p)=M_u = \max_{1\le\ell\le r} n^{-1/2} \Vert u_\ell\Vert$ 
we can write
\begin{align*}
\Pr(M_u > 2) &\le r \max_{1\le \ell\le r}\Pr\bigl( \Vert u_\ell\Vert^2-n > 3n\bigr) \\
&\le r  \max_{1\le\ell\le r}\E\bigl((\Vert u_\ell\Vert^2-n)^{\nu/2}\bigr)/ (3n)^{\nu/2} \\
&\le C rn^{-\nu/2},
\end{align*}
for some $C<\infty$.
Thus $M_u(p)$ is a.s.\ bounded, if the sequence $rn^{-\nu/2}$ is summable.
Summability holds by the assumption that the sequence $r/n^{1+\delta/4}$ is summable.
This finishes the proof. 



\section{Appendix}\label{sec:app}

In this section, we prove the remaining theorems and some supporting
results.  Familiarity with random matrix theory 
is assumed. See for instance \cite{yao2015large}. We also provide some additonal numerical results, and algorithm details.

By  $\cplx^+$ we mean the set
$\{ z\in \cplx\mid \im(z)>0\}$.

\subsection{Proof of Theorem \ref{thm:upper_edge_conv}}
\label{sec:upper_edge_conv_pf}
The Stieltjes transform of $F_{\gamma, H}$  is
\begin{align}\label{eq:stieltjes}
m_{\gamma,H}(z) = \E_{F_{\gamma, H}}((X-z)^{-1})
\end{align} 
and we let $v_{\gamma,H}(z) = \gamma m_{\gamma,H}(z)-(1-\gamma)/z$ be its companion Stieltjes transform.  
The Silverstein equation 
\citep{marchenko1967distribution,silverstein1995analysis} for $H$ shows that  for 
$z\in\cplx^+$, $v_{\gamma,H}$ is the unique solution in $\cplx^+$ to
\begin{align}\label{eq:silv}
z= -\frac{1}{v_{\gamma,H}} + \gamma \int \frac{t \mrd H(t)}{1 + tv_{\gamma,H}}.
\end{align}
Let $G = G(v_{\gamma,H})$ be the function defined on the right hand side
of~\eqref{eq:silv}. Then $G$ is the functional inverse of the map $z \to v_{\gamma,H}(z)$ on its domain of definition.

From \citet[Chapter 6]{bai2009spectral}  it follows that the supremum of the support of $F_{\gamma, H}$ can be expressed in a simple form in the following way. Consider intervals $I_t=(t,0)$ for $t<0$ such that $G'(x)<0$ for all $x\in I_t$. It is easy to see that the set of such intervals is nonempty. Let $t^*$ be the infimum of such $t$. Then,
$$\upper(F_{\gamma, H}) = G(t^*).$$

From \citet[Chapter 6]{bai2009spectral}, 
it also follows that the structure of $G$ is such that $G$ has a singularity at 0, and one or more singularities at a discrete set of values $t_i <0$ that does not have a largest accumulation point. Among these, let $t_1$ be the largest value. Then, $G(x)>0$, for all $x\in(t_1,0)$, $G(x) \to +\infty$ as $x\to 0$ or $x \to t_1$. Moreover, $t_1<t^*<0$, and $G'(x)<0$ for $x\in(t_1,t^*)$, $G'(x)>0$ for $x\in(t^*,0)$. 

Based on the above structure, in order to show the convergence $\upper(F_{\gamma_p, \hat  H_p}) \to \upper(F_{\gamma, H})$, it is enough to show that $G_{\gamma_p, \hat  H_p}$ converges to $G_{\gamma,H}$ uniformly on any compact sub-interval $[l,u]\subset (t_1,0)$.  

To show uniform convergence, we can write 
\begin{align*}
G_{\gamma_p, \hat  H_p}(v)-G_{\gamma, H}(v)
&= \gamma_p \int \frac{t \, d\hat H_p(t)}{1 + tv}- \gamma \int \frac{t \mrd H(t)}{1 + tv}\\
&=\gamma_p \int \frac{t \mrd[\hat H_p-H](t)}{1 + tv} + (\gamma_p-\gamma)\int \frac{t \mrd H(t)}{1 + tv}.
\end{align*}
Both terms in the above display converge to zero uniformly. For the second term, $\gamma_p\to\gamma$, so it is enough to show that the integral is bounded. Now, both $1/v$ and $G(v)$ belong to compact intervals in the range of interest, therefore $\int t/(1 + tv) \mrd H(t) = G(v)+1/v$ also belongs to a compact interval. This shows that the integrand is bounded, and shows that the second term converges to zero uniformly.

 For the first term, $\gamma_p$ is bounded, so we only need to show convergence to zero of the integral. We saw that $\int t/(1 + tv) \mrd H(t)$ is bounded in the range $v\in [l,u]$. By monotonicity, it follows that $t/(1 + tv)$ is also uniformly bounded on that interval, for any $t\in [0,\upper(H)]$. Now, by assumption $\upper(\hat H_p) \to \upper(H)$, thus the measure $\hat H_p- H$ is eventually supported on $t\in [0,(1+\eps)\upper(H)]$. Therefore, by the weak convergence $\hat H_p \tod H$, it follows that  $\int t \mrd[\hat H_p-H](t)/(1 + tv) \to 0$. The convergence is uniform in $v$ by monotonicity. This shows that the second term converges to zero.

This finishes the proof of uniform convergence, and thus that of Theorem~\ref{thm:upper_edge_conv}.

\subsection{Proof of Proposition \ref{prop:ecdf_conv}}
\label{sec:ecdf_conv_pf}

First, $\hat D_j  = \Vert s_j +n^{-1/2} \Phi_j^{1/2} Z_j\Vert^2$ for $j=1,\dots,p$, 
where $n^{-1/2}\Phi_j^{1/2}Z_j$ is the $j$'th column of the noise $N$ and
$s_j$ is the $j$'th column of the signal $S$.
Let $E_j  = \Vert n^{-1/2} \Phi_j^{1/2} Z_j\Vert^2$. 
Then 
$$\max_j \Vert\hat D_j^{1/2} - E_j^{1/2}\Vert \le \max_j \Vert s_j\Vert \to 0.$$
Therefore, to show the claims of the proposition, it is enough to work with $E_j$ instead of $\hat D_j$. The first claim to be proved 
is that $\hat H_p \tod H$, which we may now replace by $\hat G_p\tod H$ for $\hat G_p = p^{-1}\sum_{j=1}^p \delta_{E_j}$. Similarly,
$\upper(\hat H_p) \to \upper(H)$ is equivalent to $\upper(\hat G_p) \to \upper(H)$. 

Let $m = \max_{j} |E_j - \Phi_j|$. 
Proposition \ref{prop:ecdf_conv} inherits 
from Proposition~\ref{prop:noise_op_norm} the condition that $\max_j\Phi_j\le C<\infty$.
Then 
$$m \le \max_j |\Phi_j| \cdot \max_j \bigl|n^{-1}\Vert Z_j\Vert^2-1\bigr| \le C \max_j \bigl|n^{-1}\Vert Z_j\Vert^2-1\bigr| \to 0\quad\mathrm{a.s..}$$
We have also used 
$$\Pr\bigl(n^{-1}\Vert Z_j\Vert^2-1\ge \eps\bigr) 
\le \E\bigl( (n^{-1}\Vert Z_j\Vert^2-1)^{4+2\delta}\bigr)/\eps^{4+2\delta}  = O(n^{-2-\delta}),$$
which follows from the uniformly bounded $8+\eps$-th moments of $Z$. 
Thus 
$$\Pr\bigl(\max_j |n^{-1}\Vert Z_j\Vert^2-1|>\eps\bigr) =O( p \cdot n^{-2-\delta} ),$$ 
which is summable, so a.s.\ convergence follows by the Borel-Cantelli lemma.

Since $m = \max_{j} |E_j - \Phi_j| \to 0$ a.s., and $H_p = p^{-1}\sum_j \delta_{\Phi_j} \tod H$, 
it follows that $\hat G_p = p^{-1}\sum_{j=1}^p \delta_{E_j} \tod H$ and then $\hat H_p\tod H$ too.
Similarly, since  $\upper(H_p) \to \upper(H)$, it follows that $\upper(\hat G_p) \to \upper(H)$
and so $\upper(\hat H_p)\to\upper(H)$. This finishes the proof
of Proposition \ref{prop:ecdf_conv}.

\subsection{Proof of Theorem~\ref{thm:defpa_cons}}
\label{sec:defpa_cons_pf}

We will follow the same broad strategy as in the proof of Theorem \ref{thm:fa_cons}. Namely, we will consider the more general signal-plus-noise models, and establish conditions when deflation works in that setting. Then, we will check those conditions for factor models. 

First, let us introduce some notation.
We write $X_0 = X$, and for $k\ge 0$,  write the deflated matrices as 
\begin{align*}
X_{k+1} 
&= X_{k} -  \sigma_1(X_{k}) u_1(X_{k}) v_1^\top(X_{k})\\
&= X -  \sum_{\ell=1}^k \sigma_\ell(X) u_\ell(X) v_\ell^\top(X).
\end{align*}
We can also write $n^{-1/2}X_{k+1}  = S_{k+1} + N$, where 
$$S_{k+1} = S - n^{-1/2}\sum_{\ell=1}^k \sigma_1(X_{\ell}) u_1(X_{\ell}) v_1^\top(X_{\ell})$$ 
is the deflated signal.
Let us also define $\hat D_k = \diag (n^{-1} X_k^\top X_k) $ and $\hat H_p^k = \esd(\hat D_k)$.

By carefully examining the proof of Theorem \ref{thm:fa_cons},  we can conclude that the consistency results for signal-plus-noise models hold for any fixed $k$ under a simple condition. Let us define the $\ell_2$ column matrix norm $\Vert M\Vert_{\infty,2} = \max_{j} \Vert m_j\Vert_2$, where $m_j$ are the columns of $M$. Then the condition is that the $\ell_2$ column norm of the deflated signals vanishes: $\Vert S_k\Vert_{\infty,2}\to 0$. As a simple self-consistency check, we point out that this agrees with the condition from the non-deflated case,  $\Vert S\Vert_{\infty,2}\to 0$, stated in Proposition \ref{prop:ecdf_conv}.

The decision in the first step is based on checking $\sigma_1(n^{-1/2}X)>\upper(F_{\gamma_p, \hat  H_p})$. For this, the existing analysis implies that DDPA selects the first factor provided it is perceptible. This does not require any additional conditions. 

The decision in the second step is based on $\sigma_1(n^{-1/2}X_1)=\sigma_2(n^{-1/2}X)>\upper(F_{\gamma_p, \hat  H_p^1})$. For this, if we can show that $\Vert S_1\Vert_{\infty,2}\to 0$, then we can conclude that Deflated PA selects the second factor, provided it is perceptible. This is the first nontrivial step.

Recall that $S_1 = S - \sigma_1(X) u_1(X) v_1^\top(X)$, and that we are under the conditions of  Theorem \ref{thm:fa_cons}, when DPA works. Thus we can assume $\Vert S\Vert_{\infty,2}\to 0$, and we need to show that $\Vert \sigma_1(X) u_1(X) v_1^\top(X)\Vert_{\infty,2}\to 0$. Equivalently, $\sigma_1(X) \Vert v_1(X)\Vert_{\infty}\to 0$. 

%

This follows from analyzing spiked covariance models, using a proof strategy pioneered by \cite{benaych2012singular}, but nonetheless with several new steps. Recall that $m_H(z) = \E_H((X-z)^{-1})$ is the Stieltjes transform of the distribution $H$. 

\begin{prop}[Spike $\ell_\infty$ norm] 
\label{prop:spike_norm}
Suppose that $X= \sum_{\ell=1}^k \theta_\ell a_\ell b_\ell^\top + n^{-1/2} Z \Phi^{1/2}$ with $\theta_1>\theta_2>\ldots>\theta_k$ depending on $n,p$ and possibly diverging, and $a_\ell,b_\ell$ unit vectors that are non-random or random and independent of the noise,  under the asssumptions of Proposition \ref{prop:noise_op_norm}. Suppose that
$\theta_\ell \Vert b_\ell\Vert_{\infty}\to  0$ for $\ell=1,\dots,k$.
Suppose moreover that, for any $z\in\mathbb{C}^+$, $x^\top (\Phi -zI_p)^{-1} b_\ell - m_H(z) \cdot x^\top b_\ell \to 0$ uniformly for $\Vert x\Vert \le 1$. 
Then for all $\ell\le k$ such that $\sigma_\ell(X)>b$ a.s., we have $\sigma_\ell(X)\Vert v_\ell(X)\Vert_{\infty}\to  0$ a.s.
\end{prop}
\begin{proof}
See Section~\ref{sec:spike_norm_pf}.
\end{proof}

Continuing with the general theory, the decision in the third step is based on $\sigma_1(n^{-1/2}X_2)=\sigma_3(n^{-1/2}X)>\upper(F_{\gamma_p, \hat  H_p^2})$. For this step, we need to show $\Vert S_2\Vert_{\infty,2}\to 0$. Recall that $S_2 = S_1 - \sigma_1(X_1) u_1(X_1) v_1^\top(X_1)$, and we already showed $\Vert S_1\Vert_{\infty,2}\to 0$. Thus, as above, we need to show that $\sigma_1(X_1)\Vert v_1(X_1)\Vert_{\infty}\to  0$. Now,  $X_1 = X - \sigma_1(X) u_1(X) v_1^\top(X)$, so by definition of the deflation algorithm, $v_2(X) = v_1(X_1)$, $\sigma_2(X) = \sigma_1(X_1)$, and under our conditions, $\sigma_2(X)$  $\Vert v_2(X)\Vert_{\infty}\to 0$ by Proposition \ref{prop:spike_norm}. Thus, this claim follows by definition. Hence, the deflated PA algorithm has the correct behavior in the second step.

The above reasoning becomes an induction argument showing that deflated PA selects all perceptible factors. It remains to argue that it does not select any imperceptible factors. This follows from the previous analysis if $k=0$, so we assume $k \ge 1$.

The decision in the $k+1$-st step is based on checking $\sigma_1(n^{-1/2}X_{k})>\upper(F_{\gamma_p, \hat  H_p^{k+1}})$. Now, $\sigma_1(n^{-1/2}X_{k}) = \sigma_{k+1}(n^{-1/2}X) <b-\eps$ a.s., so for this factor to be non-selected a.s., it is enough to show that $\upper(F_{\gamma_p, \hat  H_p^{k+1}}) \to b$ a.s.. Similarly to the above argument, for this it is enough that $\Vert S_{k}\Vert_{\infty,2}\to 0$, which reduces to $\sigma_1(X_{k-1}) \Vert v_1(X_{k-1})\Vert_{\infty}\to  0$. The key point here is that $v_1(X_{k-1}) = v_{k}(X)$, so this condition follows by Proposition \ref{prop:spike_norm}. Importantly, this singular value satisfies the condition $\sigma_1(X_k)>b$ a.s., so that the above proposition applies. This shows that under the conditions of Proposition \ref{prop:spike_norm}, deflated PA does not select any imperceptible factors a.s.. This finishes the analysis of the signal-plus-noise case.

\subsubsection{Finishing the proof of Theorem \ref{thm:defpa_cons}}

We will check that the conditions of Proposition \ref{prop:spike_norm} hold, using a strategy similar to the proof of Theorem \ref{thm:fa_cons}. In the factor model $n^{-1/2}X = n^{-1/2}U \Psi^{1/2}\Lambda^\top +n^{-1/2}Z\Phi^{1/2}$, the signal component is 
$$S= n^{-1/2} \sum_{\ell=1}^r u_\ell d_\ell^\top = \sum_{\ell=1}^r \theta_\ell a_\ell d_\ell^\top,$$ 
where $a_\ell = u_\ell/\Vert u_\ell\Vert$ and $\theta_\ell = n^{-1/2}\Vert u_\ell\Vert$. 

According to Proposition \ref{prop:spike_norm}, we need the following conditions: 

\begin{compactenum}[\qquad\bf1)]
\item $\theta_\ell \Vert d_\ell\Vert_\infty\to 0$ for $1\le\ell\le k$: Since $u_k$ has IID entries with mean 0 and variance 1, by the LLN $n^{-1/2} \Vert u_k\Vert \to 1$ a.s.. 
Moreover $\Vert d_\ell\Vert\to 0$ by assumption. This guarantees the condition.

\item $\Vert d_\ell\Vert^{-1}(x^\top (\Phi -zI_p)^{-1} d_\ell - m_H(z) \cdot x^\top d_\ell) \to 0$ uniformly in $x$. This follows by assumption.
\end{compactenum}
The noise conditions of Proposition \ref{prop:spike_norm} hold because they are the same assumptions required by  Theorem \ref{thm:fa_cons}. Thus, all conditions are satisfied, finishing the proof. 

\subsection{Proof of Proposition \ref{prop:spike_norm}}
\label{sec:spike_norm_pf}
For simplicity, we handle the case $k=1$ first, so $X= \theta \cdot a b^\top + N$.
Suppose $c$ is a singular value of $X$, with singular vectors $u,v$ of $n$ and $p$ coordinates. Then we have 
$Xv =  cu.$
Denoting $c_1 = a^\top u$, $c_2 = b^\top v$, the equation reads
$( \theta \cdot a b^\top + N) v =  cu$, or $c_2\theta\cdot a + Nv = cu$. By symmetry, $c_1\theta \cdot b+N^\top u = cv$. In matrix form:
\begin{align*}
\begin{bmatrix} 
c I_n & 
-N \\ 
-N^\top & 
c I_p
\end{bmatrix} 
\begin{bmatrix} u \\  v \end{bmatrix}
=
\begin{bmatrix} c_2\theta\cdot a  \\  c_1\theta\cdot b \end{bmatrix}.
\end{align*}
Suppose now that $c$ is not a singular value of $N$. Let $R = (c^2 I_n - N N^\top)^{-1}$, and $\tilde R = (c^2 I_p -  N^\top N)^{-1}$. By the partitioned matrix inverse formula, we obtain
\begin{align*}
\begin{bmatrix} u \\  v \end{bmatrix}
=
\begin{bmatrix} 
c R & 
R N \\ 
N^\top R & 
c \tilde R
\end{bmatrix} 
\begin{bmatrix} c_2\theta\cdot a  \\  c_1\theta\cdot b \end{bmatrix}.
\end{align*}
Thus, 
$$v = c_2\theta\cdot  N^\top Ra +  c_1\theta c \cdot\tilde Rb.$$

Now, we also know that $c>b$ a.s., so that $c$ is a perceptible component. Also, $|c_1|,|c_2| \le 1$, hence to show that $c\Vert v\Vert_\infty \to 0$, it is enough to show that 
$$\theta c \cdot \Vert N^\top Ra\Vert_\infty \to 0\quad\text{and}\quad
\theta c^2 \cdot \Vert \tilde Rb\Vert_\infty \to 0.$$
The two claims are similar, let us focus on the second. We use the representation $\Vert y\Vert_{\infty} = \sup_{\Vert x\Vert_1\le 1} x^\top y$. Thus, $\Vert \tilde Rb\Vert_\infty  = \sup_{\Vert x\Vert_1\le 1} x^\top \tilde Rb$.

Using the \emph{deterministic equivalent} results in \cite{bai2007asymptotics}, and denoting by $m_{\gamma,H}$ the Stieltjes transform of $F_{\gamma,H}$, we have 
\beqs
|x^\top \tilde Rb - x^\top b\cdot m_{\gamma,H}(c^2)|\to 0.
\eeqs
Indeed, this follows from Theorem 1 of \cite{bai2007asymptotics}, taking $v\to 0$, as in \cite{dls2016}. This requires the assumption that $x^\top (\Phi -zI_p)^{-1} b - m_H(z) \cdot x^\top b \to 0$, for $z$ with $\Im(z)>0$ fixed. It is not hard to see that the convergence rate is uniform in $x$, provided that it is uniform in the assumption. 

Hence, to obtain $\theta c^2\cdot \Vert\tilde Rb\Vert_\infty \to 0$, it is enough that $\theta \Vert b\Vert_\infty \cdot c^2 m_{\gamma,H}(c^2) \to 0$. Now, $\theta \Vert b\Vert_\infty \to 0$ by assumption, and  $c^2 m_{\gamma,H}(c^2)$ is uniformly bounded. This shows the required claim. The multispiked case is similar, and thus the proof is omitted.

\subsection{Numerical experiments}

We provide additonal numerical simulations to understand and compare the behavior of our proposed methods.

\subsubsection{DPA versus PA---larger $p$}

We compare DPA to PA, following the setup in the paper, but with larger aspect ratio. Specifically, we take $p = 300$, and $n = 75, 150$. 
The results in Figure \ref{fig:PA2} largely agree with the simulations in the main paper. 

\begin{figure}
\begin{subfigure}{.5\textwidth}
  \centering
  \includegraphics[scale=0.43]{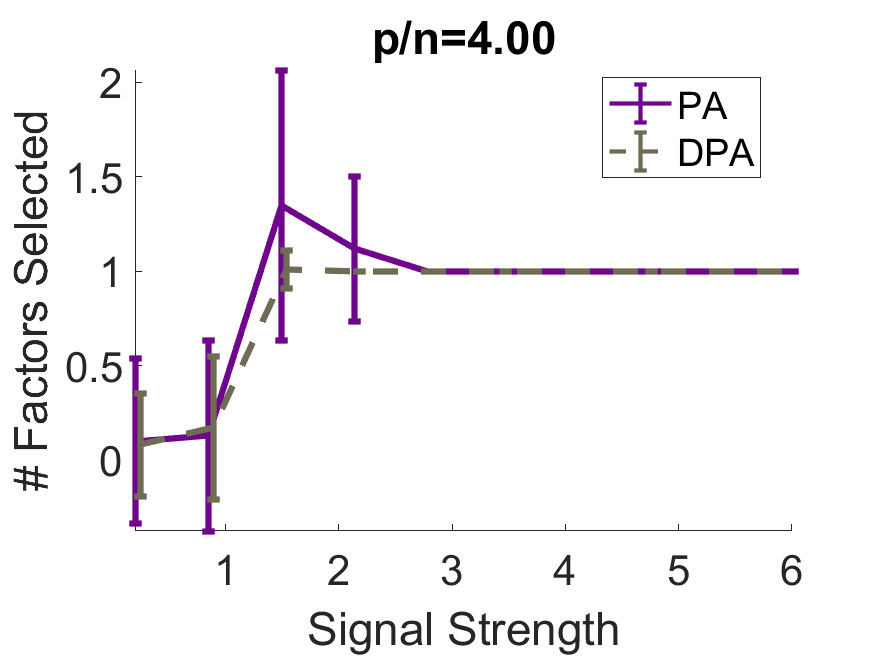}
  \caption{$\gamma = 4$.}
\end{subfigure}
\begin{subfigure}{.5\textwidth}
  \centering
  \includegraphics[scale=0.43]{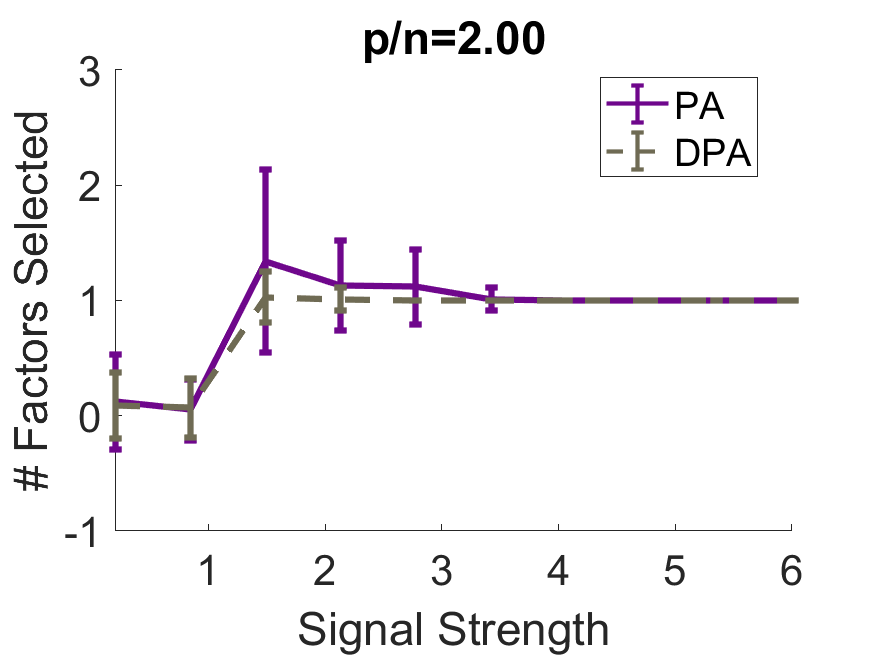}
  \caption{$\gamma = 2$.}
\end{subfigure}
\caption{ Mean and $\pm1$SD of number of factors selected by PA and  DPA as a function of signal strength.}
\label{fig:PA2}
\end{figure}

\subsubsection{DPA versus PA---non-Gaussian data}

We compare DPA to PA, following the setup in the paper, but with non-Gaussian data. Specifically, we take $p = 300$, and $n = 75$. We generate the noise $\ep$ as iid with Bernoulli($s$) entries. We standardize each entry by subtracting $s$ and dividing by $[s(1-s)]^{1/2}$. We take $s = 1/2$ and $1/20$. 
The results in Figure \ref{fig:PA3} largely agree with the simulations in the main paper.  This is expected from universality results in random matrix theory. 

\begin{figure}
\begin{subfigure}{.5\textwidth}
  \centering
  \includegraphics[scale=0.43]{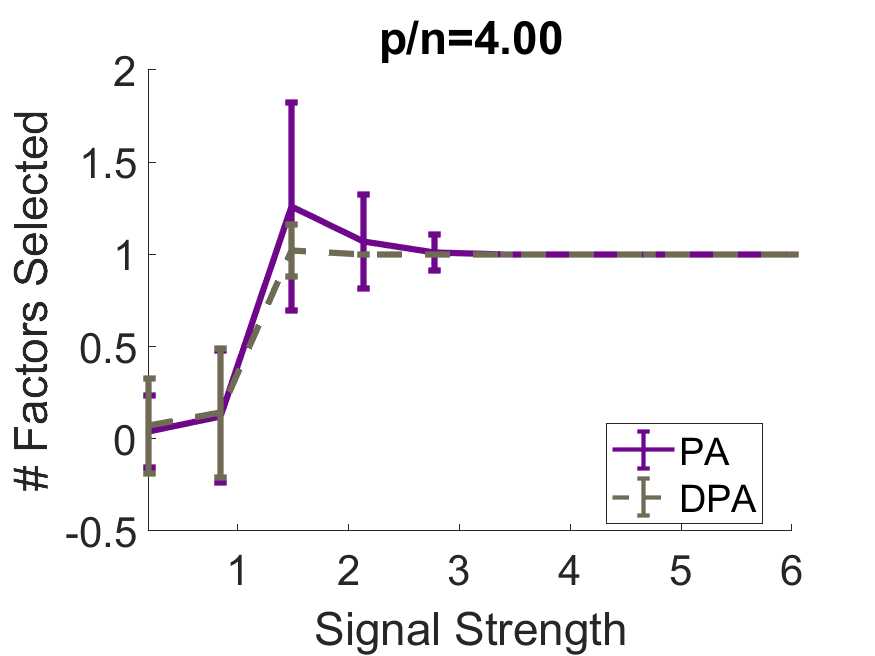}
  \caption{$s = 1/2$}
\end{subfigure}
\begin{subfigure}{.5\textwidth}
  \centering
  \includegraphics[scale=0.43]{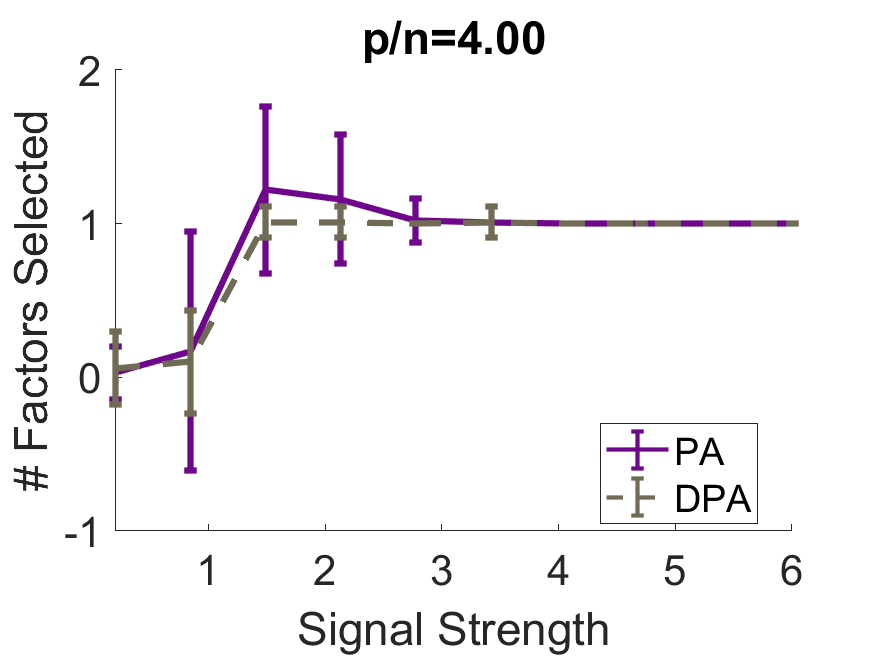}
  \caption{$s = 1/10$}
\end{subfigure}
\caption{ Mean and $\pm1$SD of number of factors selected by PA and  DPA as a function of signal strength for non-Gaussian data.}
\label{fig:PA3}
\end{figure}

\subsubsection{DDPA Clustering}

We check the behavior of DDPA under clustered data. We want to simulate hierarchically clustered data that mimics the structure of HGDP: there are several continents (Europe, Asia), and within each continent there are several regions (Western Europe, Eastern Europe), and within each region there are several countries (e.g., within Western Europe there is Great Britain, Germany, etc). To achieve this, we let the covariance matrix of the $n$ samples  $\Gamma$ be a depth-$d$ BinaryTree covariance structure, used in genetics to model the correlations between populations with an evolutionary history described by a balanced binary tree \citep{pickrell2012inference}. The eigenvalue spectrum of $\Gamma$ equals $H_n = \sum_{i=1}^{d} 2^{-i} \delta_{2^i}$, where $\delta_c$ is the point mass at $c$. This example was also used in \cite{dobriban2015high}.

\begin{figure}
  \centering
  \includegraphics[scale=0.75]{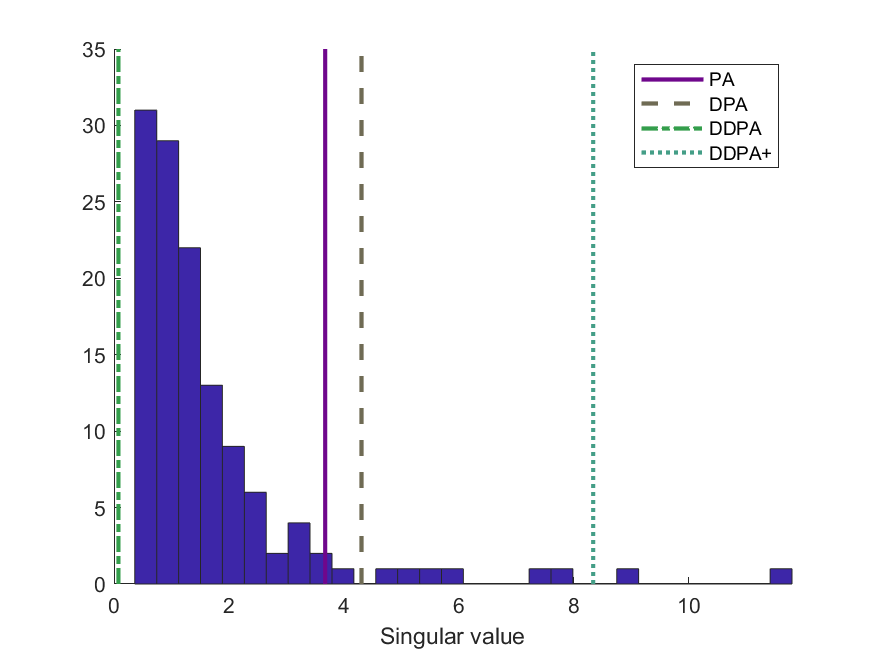}
\caption{Singular value histogram of simulated clustered data. Overlaid  are the thresholds where selection of factors stops for the four methods.}
\label{fig:h}
\end{figure}

We choose $d = 7$, so $n = 127$, and $p = 254$. We simulate from the same 1-factor model as before, except that the noise has the form $\Gamma^{1/2} \ep \Sigma^{1/2}$. Here $\ep$ is normal, so that for $\Gamma$ we can choose a diagonal matrix with the right eigenvalues (given above). 

On Figure \ref{fig:h} we see plot the same plot as for the HGDP data, but for the simulated data. We observe the following

\benum
\item The overall shape of the eigenvalue distribution is similar to that of the HGDP data. There is a long right tailed bulk, and a few separated eigenvalues. However, they do not correspond to real factors, but are instead due to the clustering.
\item The overall performance of the four methods is similar to that on the HGDP data. DDPA+ is the most conservative, selecting only two factors. Then, in order, DPA selects somewhat more, PA even more, and DDPA selects all factors. Thus, the failure of DDPA is also captured on this dataset. 
\eenum 

In conclusion, the behavior of this simulation is quite similar to the HGDP data. 
This is consistent with an explanation of the DDPA results there being
due to a hierarchical clustering structure within those populations.

\subsubsection{PA vs DPA on clustered data}

To address a question from a referee, we perform additional simulations comparing PA to DPA on clustered data. This is a setting where the two algorithms are not designed to perform well. Indeed, any algorithm designed for this setting should take the clustering structure into account. However, it is still interesting and valuable to understand the behavior of our methods in this setting, with the understanding that they are not supposed to be ``correct". With this in mind, we try to understand the following phenomena (for which questions we thank a referee): 

\begin{enumerate}
\item PA has been reported in several works to overestimate the number of factors. However, we have found that PA can underestimate the number of factors due to shadowing. Is there a cancellation?
\item Does the same hold for DPA?
\item How much does the sample size matter?
\end{enumerate}

To study these issues, we perform simulations in the setting of clustered data introduced above, where we change the strength of the large signal, and also change the dimension. Note that, in the model above, the sample size is fixed as a power of two, and so it is easier to study the effects of changing the sample size by changing the dimension instead. Specifically, we change the aspect ratio from $\gamma = 2$ to $\gamma = 1.5$. We also vary the signal strength on a grid from 0 to 10. 

\begin{figure}
\begin{subfigure}{.5\textwidth}
  \centering
  \includegraphics[scale=0.43]{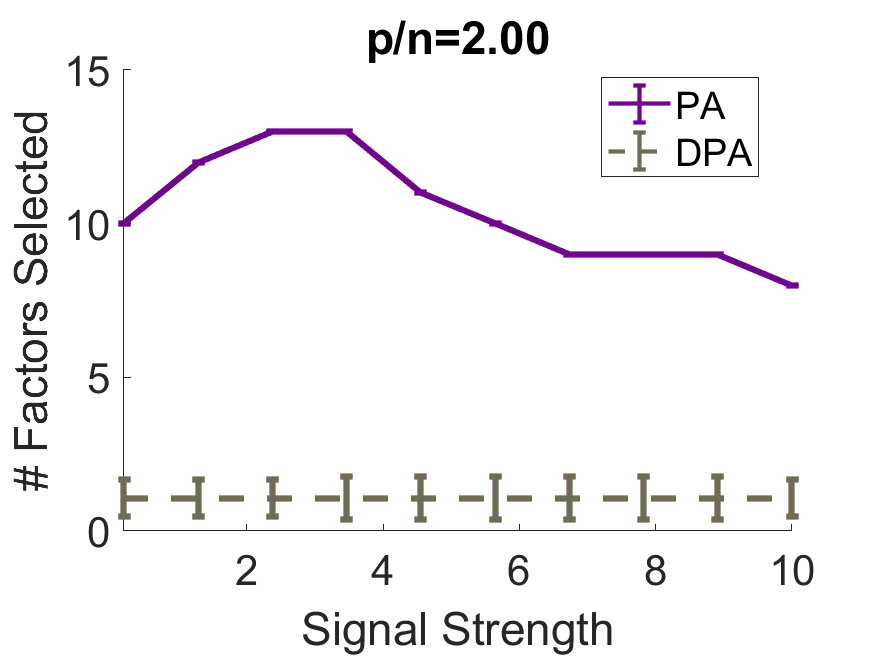}
  \caption{$\gamma = 2$}
\end{subfigure}
\begin{subfigure}{.5\textwidth}
  \centering
  \includegraphics[scale=0.43]{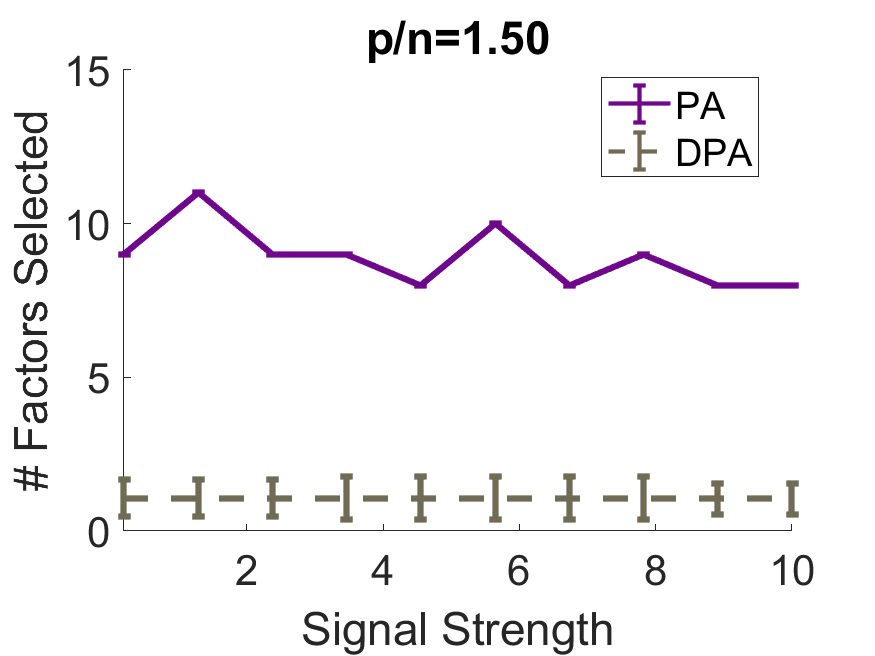}
  \caption{$\gamma = 1.5$}
\end{subfigure}
\caption{Number of factors selected by PA and  DPA as a function of signal strength on clustered data.}
\label{fig:p}
\end{figure}

The results in Figure \ref{fig:p} suggest the following answers to the questions above: 

\begin{enumerate}
\item Is there a cancellation? Yes, there appears to be a cancellation. For a small signal strength, the number of signals selected by PA can be very large, but for larger signal strength, this number decreases.
\item Does the same hold for DPA? First of all, DPA is more accurate, selecting close to one factor on average. It is not clear if the same cancellation phenomenon holds for DPA. 
\item How much does the sample size matter? A larger sample size does indeed appear to help, at least to some extent. On the right plot, we see that the number of factors selected by PA is smaller (which is more accurate in the present case). However, the number of factors selected appears to plateau at a large number.
\end{enumerate}

{
\setlength{\bibsep}{0.2pt plus 0.3ex}
\bibliographystyle{plainnat-abbrev}
\bibliography{references}
}

\end{document}